\def\x{{\bf   x}}
\def\Q{{\bf   Q}}
\def\({\left(}
\def\[{\left[}
\def\){\right)}
\def\]{\right]}
\def\grad{\nabla }
\newtheorem{rem}{Remark}
\newtheorem{lem}{Lemma}
\newtheorem{thm}{Theorem}
\numberwithin{equation}{section}
\numberwithin{thm}{section}
\numberwithin{lem}{section}
\numberwithin{rem}{section}
\begin{document}
\title{Stabilized   energy factorization   approach for Allen--Cahn equation with logarithmic Flory--Huggins   potential}

\author{
Xiuhua Wang\thanks{School of Mathematics and Statistics, Hubei Engineering  University, Xiaogan 432000, Hubei, China.}
\and Jisheng Kou\thanks{Corresponding author.  School of Civil Engineering, Shaoxing University, Shaoxing 312000, Zhejiang, China. Email: {\tt jishengkou@163.com}.} 
\and Jianchao Cai\thanks{Corresponding author. Institute of Geophysics and Geomatics, China University of Geosciences, Wuhan 430074,  Hubei, China.
 Email: {\tt caijc@cug.edu.cn.}}
  }
 \maketitle

\begin{abstract}
The Allen--Cahn equation is one of fundamental equations of phase-field models, while the logarithmic Flory--Huggins   potential is one of the most useful  energy potentials in various phase-field models.  In this paper, we consider numerical schemes for solving the Allen--Cahn equation with logarithmic Flory--Huggins potential. The main challenge is how to design efficient numerical schemes that preserve the maximum principle and energy dissipation law due to the strong nonlinearity of the energy potential function.  We propose  a novel energy factorization  approach with the stability technique, which is called stabilized energy factorization approach,  to deal with the Flory--Huggins potential.  One advantage of the proposed approach is that all nonlinear terms can be treated semi-implicitly and the resultant numerical scheme is purely  linear and easy to implement.  Moreover, the discrete maximum principle and unconditional energy stability of the proposed scheme are rigorously proved using the discrete variational principle. Numerical results are presented to demonstrate the stability and effectiveness of the proposed scheme.

\end{abstract}
\begin{keywords}
Phase-field model; Allen--Cahn equation; Flory--Huggins   potential;     Maximum principle; Energy stability.
\end{keywords}
\begin{AMS}
65M06, 65M12.
 \end{AMS}
 
\section{Introduction}   

 It is known that the logarithmic Flory--Huggins potential \cite{Cahn1977LFH,Yang2019IEQ,Wells2006FLP,chen2017LP,Gomez2011LP,Copetti1992FLP,Gomez2008LP,Kastner2016LFP,Wodo2011FLP,Wells2006FLP} and double well potential \cite{Kay2009PhaseField,Kim2012PhaseField,Han2015DWconvex,Zhao2016DWconvex,Liu2015DWconvex,Guillen2013DWgeneralAV,chen2015DWstab,Ma2017DWstab,Xu2006Stab,shen2015DWstab} are  the most useful energy potentials in phase field models. Moreover, the Flory--Huggins energy potential is   more  realistic  than the double well potential  from the viewpoint of physics\cite{Cahn1977LFH,chen2017LP}.  However, compared to the double well potential,  there are  more scarce    works of numerical algorithm developments and numerical analysis devoted to the Flory--Huggins potential  \cite{Yang2019IEQ}.  There are two possible reasons: one  is the difficulty resulting from   the strong nonlinearity of this potential function, and the other is  the singularity  in the situations that   the phase variable approaches the limits 0 and 1. So it becomes a very challenging  issue to design  efficient  algorithms  for phase-field models with the Flory--Huggins potential.  The Allen--Cahn equation, as well as the Cahn--Hilliard equation \cite{Cahn1958CH,Cahn1977LFH}, is the fundamental equation of phase-field models,  so it has become an attractive research topic in mathematical analysis \cite{Evans1992ACmaximumprinciple} and numerical simulation \cite{Feng2003AC,Feng2013AC}.  The maximum principle and    energy dissipation law are two intrinsic  properties of   the   Allen--Cahn equation        \cite{Evans1992ACmaximumprinciple}.      Numerical schemes preserving these    key properties are highly preferred for  solving this equation.  However, it is still   challenging   to design numerical schemes that mimic these properties.   A brief review on the developments of such schemes in the literature will be stated as below. The  main goal of this work is to propose \emph{linear, unconditionally energy stable} and  \emph{maximum principle preserving} numerical schemes for the Allen--Cahn equation with the Flory--Huggins   potential.

 The traditional  explicit and fully implicit time marching schemes for bulk free energy functions have been shown to suffer from very severe constraints  on the time step size \cite{Boyer2011CH,Feng2003AC,Copetti1992FLP}. Therefore, it becomes a major issue to design the   semi-implicit  energy stable numerical schemes that remove the constraints on the time step size.   
  A popular  approach used for designing  such schemes  is the convex splitting approach \cite{Elliott1993ConvexSpliting,Eyre1998ConvexSplitting,shen2015SIAM,Wise2009Convex,Baskaran2013convexsplitting,Han2015DWconvex,Zhao2016DWconvex,Liu2015DWconvex,Guillen2013DWgeneralAV,shen2012convex,Li2017Convex,Zhang2010DWconvex}, which   produces  unconditionally energy stable schemes by the use of implicit treatment for the convex terms of the energy functions    and explicit treatment for the concave terms. Besides various phase-field models,  it has also been successfully applied to deal with the  Helmholtz free energy    based on a realistic  equation of state   in recent years \cite{qiaosun2014,fan2017componentwise,kouandsun2017modeling,kousun2018Flash,Peng2017convexsplitting,kou2017nonisothermal,kou2018nonisothermal}.  However, it usually results in the nonlinear schemes; for instance, the convex splitting approach  for   the logarithmic Flory--Huggins   potential  leads to the scheme  involving the implicit    logarithmic function.  As a consequence, practical implementation of such schemes  requires nonlinear iterative solvers and computational cost may be expensive accordingly. 
The other  commonly used conventional  approach  for constructing    linear numerical schemes is  the linear stabilization approach \cite{chen2015DWstab,Ma2017DWstab,Xu2006Stab,Yang2009stab,shen2015DWstab,Kim2012PhaseField,kousun2015SISC},  which simply treats  all nonlinear terms by the fully explicit way and introduces a linear stabilization  term    to remove the time step constraint. We observe that the  stabilization   approach is effective for the double well potential, but it works not well for  the  logarithmic  potential probably because of more complicate nonlinearity of it.  In \cite{kou2017compositional,kouandsun2017modeling},  a nonlinear stabilization approach has been proposed for  the Peng-Robinson equation of state  (PR-EOS) \cite{Peng1976EOS} that is   one of the most  useful   tools in petroleum industry and chemical engineering.  For the  logarithmic Flory--Huggins   potential,  the instability will occur and numerical results may be out of normal range  when the energy parameter takes a large value. Inspired by the approach in \cite{kou2017compositional,kouandsun2017modeling}, we will  incorporate the nonlinear stabilization term for  the  logarithmic Flory--Huggins   potential to    ensure the symmetric   positive definiteness and  discrete maximum principle of the resultant scheme in the case with a large energy parameter.  

In recent years,   the novel auxiliary variable approaches have been developed and successfully applied to design linear numerical schemes for various diffuse interface models \cite{Yang2019IEQ,Yang2017IEQ,Yang2017IEQ2,Zhao2017DW_IEQ,Li2017IEQ,kou2018SAV,Yang2019gPAV,Shen2018SAV,Shen2018SAVConvergence,Zhu2018SAV,Zhu2019SAV}.  The first approach is the so-called invariant energy quadratization (IEQ) approach   \cite{Yang2019IEQ,Yang2017IEQ,Yang2017IEQ2,Zhao2017DW_IEQ} that   has been successfully  applied  to devise efficient, linear schemes for various  phase-field models intensively in recent years.  The basic  idea of   IEQ        is to   define    a set of auxiliary variables  and then transform the  original  free energy function  into a quadratic form.   The second approach is the scalar auxiliary variable (SAV) approach  proposed in \cite{Shen2018SAV}, which differs from IEQ that it uses a scalar auxiliary variable instead of the space-dependent auxiliary variable. The convergence and error estimates for the SAV schemes are studied in \cite{Shen2018SAVConvergence}.  Very recently, a generalized positive auxiliary variable, termed gPAV, is proposed in \cite{Yang2019gPAV} inspired by    SAV and IEQ.      Numerical schemes developed   by the  auxiliary variable  approaches are linear   and easy to implement.   As a result,  such   approaches  have rapidly become the   useful and successful     tools   for simulating  a variety of diffuse interface models.  The auxiliary variable approaches use the    transformed    energies that are equivalent to the original  energies at the continuous  level, but   there are   the discrete errors   between   the original  energies and   transformed   energies at the time-discrete level \cite{Yang2019IEQ},  
thus  the resultant    schemes   may  not preserve  the original energy dissipation  law       although the     transformed energy dissipation can be proved.

More recently, a novel energy factorization (EF) approach is proposed  in   \cite{kou2019EF} to design linear,  efficient numerical schemes for the diffuse interface model with PR-EOS.
The basic  idea of EF  is to factorize an energy function/term   into   a product of several factors, which can be treated in the energy difference  by the use of individual  properties of each factor.  Compared with the convex splitting approach, the EF approach can produce the linear  semi-implicit    schemes. It is different from the IEQ/SAV approach that the  EF approach never introduces any new independent energy variable, and thus the resultant schemes can preserve  the original  energy dissipation law.  The EF approach has been successfully applied to  design  the   semi-implicit linear schemes for the PR-EOS model \cite{kou2019EF}.   In this paper, we will propose   the stabilized EF approach  to deal with  the logarithmic Flory--Huggins energy potential,   and from this,  we will propose  a linear  semi-implicit discrete scheme    inheriting  the original energy dissipation law.  The proposed scheme is efficient and very easy to implement.

For the phase-field models, the phase variables usually comply with the specific  maximum principles in terms of their physical meanings. It is   known that the   Allen-Cahn equation    satisfies the maximum principle \cite{Evans1992ACmaximumprinciple}. A discrete scheme preserving the maximum principle is highly preferred for solving such equations since it can eliminate spurious numerical solutions such that it can  not only   ensure   the physical reasonability  of numerical results  but also    improve the long-time stability of   numerical simulation substantially.  However, the efforts regarding  the maximum principle preserving discrete schemes and numerical analysis are even more scarce partially  because such schemes are actually very challenging due to   particularities of  involved  spatial and temporal discretization \cite{chen2017LP}.
In \cite{Tang2016AC}, the commonly used numerical method for the Allen-Cahn equation with the double well potential was proved to preserve the maximum principle by the matrix analysis approach.   For the  logarithmic Flory--Huggins energy potential, the implicit Euler scheme  was analyzed   in \cite{Copetti1992FLP} under the restricted  constraint on the time step size due to   the fully implicit treatment for all energy terms, and recently,  the discrete maximum principle of the semi-implicit convex splitting  scheme was analyzed in \cite{chen2017LP}. More recently,   the maximum principle of a linear numerical scheme for  the PR-EOS diffuse interface model  has been proved in \cite{kou2019EF}. However, there are no results regarding the linear semi-implicit scheme with the maximum principle for the phase-field models with  the  logarithmic Flory--Huggins energy potential. This gap will be filled in this paper. 
 The discrete maximum principle  of the proposed   scheme    will be rigorously proved  using the discrete variational principle    \cite{kou2019EF}. Appropriate  stability strategy 
plays a  crucial  role in preserving   the discrete maximum principle.  
 
The cell-centered finite difference (CCFD) method \cite{Tryggvason2011book} is applied  as the spatial discretization method. The CCFD method can be equivalent to a special mixed finite element method  with quadrature rules \cite{arbogast1997mixed}  and  has been applied for the phase field models \cite{Furihata2001FDM,Hu2009FDM,Khiari2007FDM,Wise2009Convex,Wise2010CahnHilliardHeleShaw,shen2016JCP}.  The discrete variational principle will be used to carry out theoretical analysis of the proposed scheme.

The new aspects of the current work   are listed as follows:
(1) the stabilized  energy factorization  approach    to treat the logarithmic Flory--Huggins   potential; (2) the purely  linear numerical scheme inheriting the original  energy dissipation law; (3) the discrete maximum principle of the proposed scheme.

The rest of this paper is structured as   follows. In Section \ref{secModelEqn}, we describe the Allen--Cahn equation with Flory--Huggins potential.  In Section \ref{secDiscreteSpaces}, we   introduce the discrete function spaces and discrete operators based on CCFD.   In Sections \ref{secSchemeFH},  we   propose the stabilized   energy factorization   approach to deal  with the logarithmic Flory--Huggins   potential,   and then present  the  fully discrete scheme. In Section \ref{secanalysis},   we carry out theoretical analysis, including the well posedness of the discrete solutions,   discrete maximum principle and unconditional energy stability. 
 In Section \ref{secNumer},  numerical results are presented  to demonstrate the effectiveness and stability of  the proposed   scheme.   Finally,   the concluding remarks are given in Section \ref{secConclusions}.

\section{Allen--Cahn equation with Flory--Huggins potential}\label{secModelEqn}

This paper is concerned with the development of efficient numerical schemes for  the following Allen--Cahn equation \cite{Allen1979ACeq}
 \begin{equation}\label{eqAC}
 \frac{ \partial\phi(\x,t) }{\partial t}
 -\epsilon^2\Delta\phi(\x,t) +f (\phi(\x,t))=0,~~~\x\in\Omega,
 \end{equation}
 where $t$ is the time, $\Omega$ denotes the bounded domain with smooth boundaries and $\epsilon$  is a  positive constant  measuring the interfacial width.  Here, $\phi$ represents  the phase variable and   $f(\phi)=F'(\phi)$ is the chemical potential, where   $F(\phi)$ is the Helmholtz free energy. Meanwhile, the   equation \eqref{eqAC} is   subject to specified  initial condition and    homogeneous Neumann/Dirichlet or periodic   boundary conditions.
 
  In this paper, we consider  the logarithmic Flory--Huggins energy potential \cite{Cahn1977LFH,Yang2019IEQ,Wells2006FLP}
 \begin{equation}\label{eqLFH}
 F(\phi)=\phi\ln(\phi)+(1-\phi)\ln(1-\phi)+\theta\(\phi- \phi^2\),~~0<\phi<1,
\end{equation}
where $\theta>2$ is the energy parameter.  It is noticed in  \cite{Wells2006FLP} and also shown in Figure \ref{LogarithmicFloryHugginsProfiles} that the choice of $\theta>2$ is necessary for $F(\phi)$ since in this case it  has two wells   and admits two phases; otherwise,  for
$\theta\leq2$,   only  a single well exists as well as a single phase. It is clear that the maximum principle, i.e. $0<\phi<1$, is crucial to ensure that $F(\phi)$ and $f(\phi)$ are well defined in  both mathematics and  physics. 

 The equation \eqref{eqAC} follows the  energy dissipation law; in fact, 
we define the energy function 
\begin{equation}\label{eqACEnergy}
E(\phi)=\int_\Omega \(F(\phi)+\frac{1}{2}\epsilon^2|\grad\phi|^2\)d\x,
\end{equation}
then  $E(\phi)$ is decreasing with time as follows
\begin{equation}\label{eqACEnergyDecay}
\frac{\partial }{\partial t}E(\phi)=-\int_\Omega\left|\frac{\partial \phi}{\partial t}\right|^2d\x\leq0.
\end{equation}

  \begin{figure}
           \centering \subfigure[Energy potential]{
            \begin{minipage}[b]{0.38\textwidth}
            \centering
             \includegraphics[width=0.95\textwidth,height=0.85\textwidth]{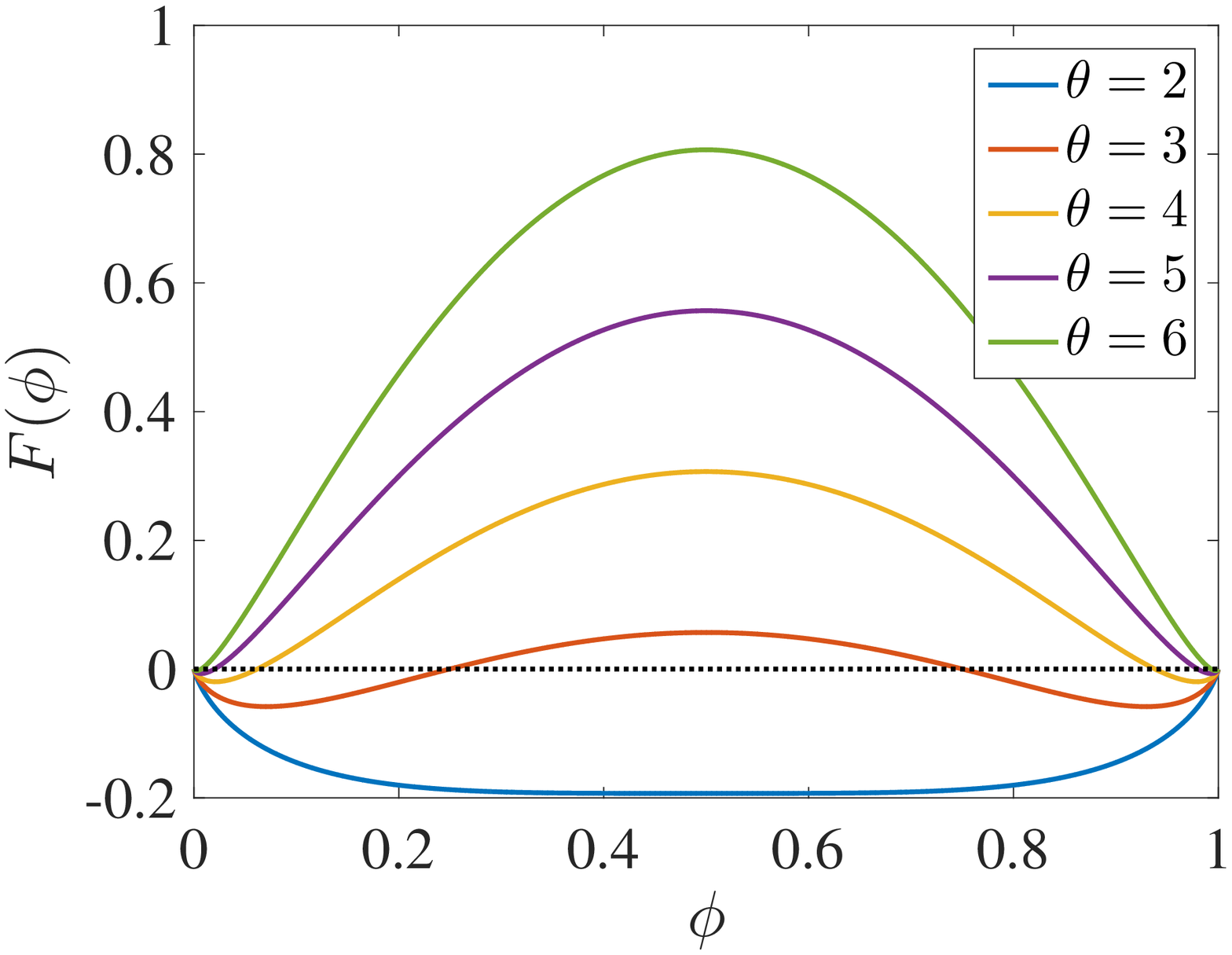}
             \label{LogarithmicFloryHugginsFunctionProfile}
            \end{minipage}
            }
            \centering \subfigure[Chemical potential]{
            \begin{minipage}[b]{0.38\textwidth}
            \centering
             \includegraphics[width=0.95\textwidth,height=0.85\textwidth]{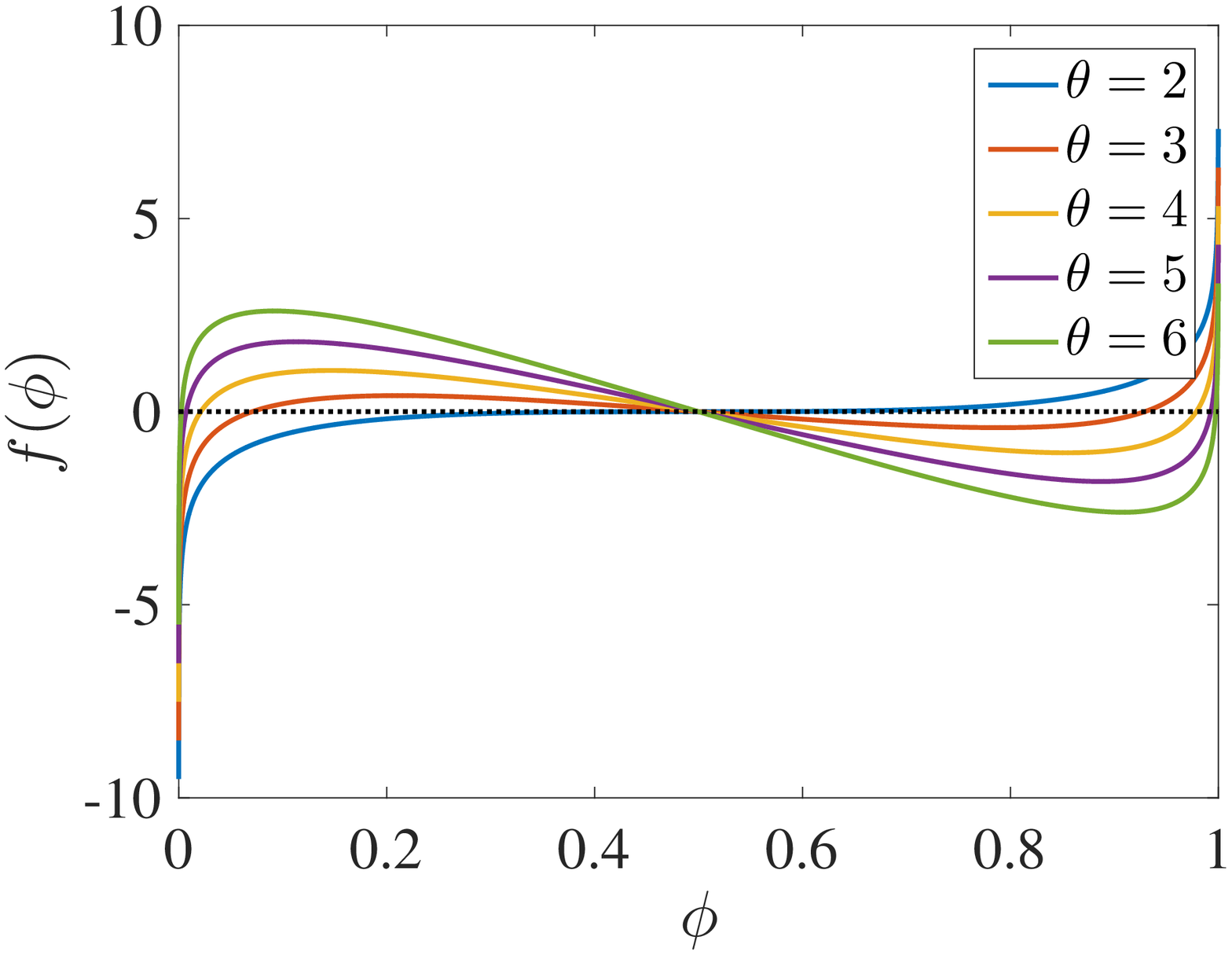}
             \label{LogarithmicFloryHugginsChemicalPotentialProfile}
            \end{minipage}
            }
            \caption{Logarithmic Flory--Huggins energy potential and   chemical potential.}
            \label{LogarithmicFloryHugginsProfiles}
 \end{figure}

\section{Discrete function spaces and discrete operators}\label{secDiscreteSpaces}
In this section, we describe the discrete function spaces,  discrete operators and discrete variational principle \cite{Tryggvason2011book,Furihata2001FDM,Hu2009FDM,Khiari2007FDM,Wise2009Convex,Wise2010CahnHilliardHeleShaw,shen2016JCP,kou2019EF,kou2018SAV}, which are based  on the cell-centered finite difference (CCFD) method \cite{Tryggvason2011book,arbogast1997mixed}.
Here, we consider the two-dimensional case only, but   the forms of the  three-dimensional case are similar.
Let  $\Omega=[l_x^a,l_x^b]\times[l_y^a,l_y^b]$, where $l_x^a<l_x^b$ and $l_y^a<l_y^b$.   For simplicity,    the uniform division   is used and      the mesh size is denoted by  $h=\frac{l_x^b-l_x^a}{N}=\frac{l_y^b-l_y^a}{M}$, where $N$ and $M$ are  positive  integers. 
The grid points are denoted by  $x_i=l_x^a+ih$ and $y_j=l_y^a+jh$, and furthermore,  the intermediate points  are denoted by $x_{i+\frac{1}{2}} =x_i+\frac{1}{2}h$ and $y_{j+\frac{1}{2}} =y_j+\frac{1}{2}h$.   

The discrete function spaces are defined as follows
  \begin{equation*}
\mathcal{C}_h=\big\{\phi: (x_{i+\frac{1}{2}},y_{j+\frac{1}{2}})\mapsto\mathbb{R},~~0\leq i\leq N-1, ~0\leq j\leq M-1\big\},
  \end{equation*}
\begin{equation*}
\mathcal{U}_h=\big\{u: (x_{i},y_{j+\frac{1}{2}})\mapsto\mathbb{R},~~0\leq i\leq N, ~0\leq j\leq M-1\big\},
  \end{equation*}
\begin{equation*}
\mathcal{V}_h=\big\{v: (x_{i+\frac{1}{2}},y_{j})\mapsto\mathbb{R},~~0\leq i\leq N-1, ~0\leq j\leq M\big\}.
  \end{equation*}
Here, we use the identification  
 $\phi_{i+\frac{1}{2},j+\frac{1}{2}}=\phi(x_{i+\frac{1}{2}},y_{j+\frac{1}{2}})$, etc.
 The homogeneous  Neumann boundary condition is considered in this paper, and the case with the homogeneous  Dirichlet or periodic   boundary condition can be formulated  and analyzed analogously. We   introduce  the subsets of  $\mathcal{V}_u$ and $\mathcal{V}_v$ involving   the homogeneous  Neumann boundary condition  as 
\begin{equation}
\mathcal{U}^0_h=\big\{u\in\mathcal{U}_h| ~u_{0,j+\frac{1}{2}}=u_{N,j+\frac{1}{2}}=0,~~0\leq j\leq M-1\big\},
  \end{equation}
\begin{equation}
\mathcal{V}^0_h=\big\{v\in\mathcal{V}_h|~v_{i+\frac{1}{2},0}=v_{i+\frac{1}{2},M}=0,~~0\leq i\leq N-1\big\}.
  \end{equation}
For $\phi\in \mathcal{C}_h$, the discrete gradient operator is defined as $\grad_h = [\grad_{h,x},\grad_{h,y}]^T$, where $\grad_{h,x}\phi\in \mathcal{U}_h^0$ and $\grad_{h,y}\phi\in \mathcal{V}_h^0$ are defined as follows
\begin{subequations}\label{eqFullyDiscreteGradofMolarDens}
\begin{equation}\label{eqFullyDiscreteGradofMolarDens01}
\grad_{h,x}\phi_{i,j+\frac{1}{2}}=\frac{\phi_{i+\frac{1}{2},j+\frac{1}{2}}-\phi_{i-\frac{1}{2},j+\frac{1}{2}}}{h},~~1\leq i\leq N-1, ~0\leq j\leq M-1,
  \end{equation}
\begin{equation}\label{eqFullyDiscreteGradofMolarDens02}
\grad_{h,y}\phi_{i+\frac{1}{2},j}= 
\frac{\phi_{i+\frac{1}{2},j+\frac{1}{2}}-\phi_{i+\frac{1}{2},j-\frac{1}{2}}}{h},~~0\leq i\leq N-1, ~1\leq j\leq M-1.
  \end{equation}
 \end{subequations}
For $u\in \mathcal{U}_h $ and $v\in \mathcal{V}_h $, the  discrete divergence    operator is defined as $\grad_h\cdot[u,v]^T=D_xu+D_yv$, where    $D_x$ and $D_y$ are defined   as follows
\begin{equation}\label{eqFullyDiscreteGradofMolarDens01}
D_{x} u_{i+\frac{1}{2},j+\frac{1}{2}}=\frac{u_{i+1,j+\frac{1}{2}}-u_{i,j+\frac{1}{2}}}{h},
~~~D_y v_{i+\frac{1}{2},j+\frac{1}{2}}=\frac{v_{i+\frac{1}{2},j+1}-v_{i+\frac{1}{2},j}}{h},
  \end{equation}
  where $0\leq i\leq N-1, ~0\leq j\leq M-1$.
  The discrete Laplace operator is defined as $\Delta_h=\grad_h\cdot\grad_h$. 
  
    We define the following  discrete inner-products:
\begin{equation*}
\( \phi,\phi'\)_h=h^2\sum_{i=0}^{N-1}\sum_{j=0}^{M-1}\phi_{i+\frac{1}{2},j+\frac{1}{2}}\phi'_{i+\frac{1}{2},j+\frac{1}{2}},
~~~\phi,\phi' \in\mathcal{C}_h,
\end{equation*}
\begin{equation*}
\(u,u'\)_h=h^2\sum_{i=1}^{N-1}\sum_{j=0}^{M-1}u_{i,j+\frac{1}{2}}u'_{i,j+\frac{1}{2}},
~~~u,u'\in \mathcal{U}_h^0,
\end{equation*}
\begin{equation*}
\( v,v'\)_h=h^2\sum_{i=0}^{N-1}\sum_{j=1}^{M-1}v_{i+\frac{1}{2},j}v'_{i+\frac{1}{2},j},~~~v,v'\in \mathcal{V}_h^0.
\end{equation*}
The discrete norms for $\phi \in\mathcal{C}_h$, $u \in\mathcal{U}_h^0$ and $v \in\mathcal{V}_h^0$ are denoted as
\begin{equation*}
\|\phi\|_h =\( \phi,\phi\)_h^{1/2},~~\|u\|_h =\(u,u\)_h^{1/2},~~\|v\|_h =\(v,v\)_h^{1/2}, 
\end{equation*}
\begin{equation*}
\|\grad_h\phi\|_h^2=\|\grad_{h,x}\phi\|_h^2+\|\grad_{h,y}\phi\|_h^2.
\end{equation*}
 The following discrete variational  formulas are obtained  by direct calculations \cite{Wise2009Convex,Wise2010CahnHilliardHeleShaw,shen2016JCP,kou2019EF,kou2018SAV}
\begin{equation}\label{eqFullyDiscreteVariationalPrinciples01}
\( u,\grad_{h,x}\phi\)_h=-\( D_x u,\phi\)_h, ~~u\in \mathcal{U}_h^0,~\phi\in \mathcal{C}_h,
\end{equation}
\begin{equation}\label{eqFullyDiscreteVariationalPrinciples02}
\( v,\grad_{h,y}\phi\)_h=-\(D_y v,\phi \)_h,~~v\in \mathcal{V}_h^0,~\phi\in \mathcal{C}_h,
\end{equation}
\begin{equation}\label{eqFullyDiscreteVariationalPrinciples02}
-\( \Delta_{h}\phi,\varphi\)_h=\(\grad_h\phi,\grad_h\varphi \)_h,~~\phi,\varphi\in \mathcal{C}_h.
\end{equation}

 The following lemma is a direct consequence of Lemma 5.1 in \cite{kou2019EF}.
\begin{lem}\label{lemDiscreteLapalace}
Let   $\phi_-=\min(\phi-a,0)$ and $\phi_+=\max(\phi-b,0)$, where $\phi\in \mathcal{C}_h$ and $a<b$. Then we have
\begin{align}\label{eqDiscreteLapalace01A}
\|\grad_h \phi_{-} \|_h^2\leq-  \(\Delta_h \phi ,\phi_{-} \)_h  ,
\end{align}
\begin{align}\label{eqDiscreteLapalace01B}
\|\grad_h \phi_{+} \|_h^2\leq -  \(\Delta_h \phi ,\phi_{+} \)_h  .
\end{align}
\end{lem}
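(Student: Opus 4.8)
The plan is to derive both estimates from the discrete summation-by-parts identity \eqref{eqFullyDiscreteVariationalPrinciples02} followed by an elementary edgewise monotonicity comparison. First I would note that $\phi_-$ and $\phi_+$ are obtained from $\phi\in\mathcal{C}_h$ by a cellwise truncation, so they again belong to $\mathcal{C}_h$. Applying \eqref{eqFullyDiscreteVariationalPrinciples02} with $\varphi=\phi_-$ and then with $\varphi=\phi_+$ gives
\begin{equation*}
-\(\Delta_h\phi,\phi_-\)_h=\(\grad_h\phi,\grad_h\phi_-\)_h,\qquad-\(\Delta_h\phi,\phi_+\)_h=\(\grad_h\phi,\grad_h\phi_+\)_h,
\end{equation*}
so it remains to show $\(\grad_h\phi,\grad_h\phi_-\)_h\ge\|\grad_h\phi_-\|_h^2$ and, similarly, $\(\grad_h\phi,\grad_h\phi_+\)_h\ge\|\grad_h\phi_+\|_h^2$.

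For \eqref{eqDiscreteLapalace01A} I would split $\phi=\phi_-+(\phi-\phi_-)$ and observe that cellwise $\phi-\phi_-=\max(\phi,a)$; thus $\phi_-$ and $\phi-\phi_-$ are the images of $\phi$ under the scalar maps $s\mapsto\min(s-a,0)$ and $s\mapsto\max(s,a)$ respectively, both of which are nondecreasing. Since each component of $\grad_h$ is, up to the factor $1/h$, the difference of a grid function evaluated at two neighbouring cells, the increment of $\phi_-$ and the increment of $\phi-\phi_-$ across any given interior edge are, respectively, differences of two nondecreasing functions taken at the same pair of cell values; hence they have the same sign (possibly one is zero) and their product is nonnegative. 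Summing these edgewise inequalities (with weight $h^2$) over all interior edges --- which are exactly the indices appearing in both $\(\grad_h\phi,\grad_h\phi_-\)_h$ and $\|\grad_h\phi_-\|_h^2$, because $\grad_h$ maps into $\mathcal{U}_h^0\times\mathcal{V}_h^0$ --- yields $\(\grad_h(\phi-\phi_-),\grad_h\phi_-\)_h\ge0$, i.e. $\(\grad_h\phi,\grad_h\phi_-\)_h\ge\|\grad_h\phi_-\|_h^2$. Combined with the identity above this proves \eqref{eqDiscreteLapalace01A}.

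The inequality \eqref{eqDiscreteLapalace01B} follows by the identical argument applied to the splitting $\phi=\phi_++(\phi-\phi_+)$ with $\phi-\phi_+=\min(\phi,b)$: now $\phi_+$ and $\phi-\phi_+$ are images of $\phi$ under $s\mapsto\max(s-b,0)$ and $s\mapsto\min(s,b)$, again both nondecreasing, so the edgewise sign comparison goes through verbatim. I expect the only point requiring care to be the bookkeeping that $\phi_\pm\in\mathcal{C}_h$ and that the discrete-gradient index sets on the two sides of each inequality coincide; the monotonicity step itself is trivial. Alternatively, since $\phi_-=\min(\phi-a,0)$ and $\phi_+=\max(\phi-b,0)$ are precisely the truncations treated in Lemma 5.1 of \cite{kou2019EF}, one may simply specialize that result.
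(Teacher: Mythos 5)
Your proof is correct. Note that the paper itself gives no argument here: it simply states that the lemma is ``a direct consequence of Lemma 5.1 in \cite{kou2019EF}.'' Your route --- summation by parts via \eqref{eqFullyDiscreteVariationalPrinciples02} followed by the edgewise observation that the increments of $\phi_\pm$ and $\phi-\phi_\pm$ across any edge are images of the same pair of cell values under nondecreasing $1$-Lipschitz maps, so the cross term $\(\grad_h(\phi-\phi_\pm),\grad_h\phi_\pm\)_h$ is nonnegative --- is the standard proof of exactly this kind of truncation estimate, and it fills in the detail the paper defers to the reference; your closing remark that one could instead just specialize Lemma 5.1 of \cite{kou2019EF} is precisely what the authors do.
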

We note that for the homogeneous Dirichlet and periodic boundary conditions, we have the similar results to \eqref{eqDiscreteLapalace01A} and \eqref{eqDiscreteLapalace01B}. 

\section{Stabilized energy factorization approach and discrete scheme}\label{secSchemeFH}

 
Let $\phi^n$ denote the discrete function at the time level $t=t_n$, where $n\geq0$. 
The key  idea of  the energy factorization (EF) approach  is that we first   factorize the energy function  as follows
 \begin{equation}\label{eqEnergyFactorizationBasic01}
E(\phi) =  \int_\Omega \sum_{i=1}^M \Phi_{i}(\phi)\cdot\Psi_{i}(\phi)d\x,
\end{equation}
where $\Phi_{i}(\phi)$ and $\Psi_{i}(\phi)$ are the   energy factors, $M\geq1$, and then we derive the following energy inequality   using the properties  of the factors 
 \begin{equation}\label{eqEnergyFactorizationBasic02}
E(\phi^{n+1})-E(\phi^{n}) \leq  \int_\Omega \mu\(\phi^{n+1},\phi^n,\cdots,\phi^0)\\(\phi^{n+1}-\phi^n\)d\x.
\end{equation}
 Thus, $\mu$ is the  discrete general chemical potential, which is usually a functional of $\phi^{n+1}$ and $\phi^n$, and more generally may   rely  on $\phi^{n-1},\cdots,\phi^0$ as well. For a specific energy $E(\phi)$, there may be several different  factorization forms, but   not all of the resultant discrete   chemical potentials $\mu$  are   linear with respect to $\phi^{n+1}$. So an ingenious factorization approach is  required   for the sake of  obtaining a linear efficient scheme.
 
As a simple example,   we consider the gradient   free energy 
 \begin{align}\label{eqEnergyFactorizationGradEnergy01}
E_\grad(\phi)=\frac{1}{2}\int_\Omega \epsilon^2|\grad \phi|^2 d\x,
\end{align}
and we can derive that
\begin{align}\label{eqEnergyFactorizationGradEnergy01}
E_\grad(\phi^{n+1})-E_\grad(\phi^n)
&=\frac{1}{2}\int_\Omega\epsilon^2\(|\grad \phi^{n+1}|^2-|\grad \phi^{n}|^2\)d\x\nonumber\\
&=\int_\Omega \epsilon^2\grad \phi^{n+1}\cdot\grad\(\phi^{n+1}-\phi^{n}\)d\x
-\frac{1}{2}\int_\Omega\epsilon^2|\grad(\phi^{n+1}-\phi^n)|^2d\x\nonumber\\
&\leq\int_\Omega \epsilon^2\grad \phi^{n+1}\cdot\grad\(\phi^{n+1}-\phi^{n}\)d\x\nonumber\\
&=-\int_\Omega \(\phi^{n+1}-\phi^{n}\)\epsilon^2\Delta \phi^{n+1} d\x,
\end{align}
which leads to the classical  implicit chemical potential $\mu_\grad^{n+1}=-\epsilon^2\Delta \phi^{n+1}$.
The other treatment for $E_\grad(\phi)$  is described as follows
\begin{align}\label{eqEnergyFactorizationGradEnergy02}
E_\grad(\phi^{n+1})-E_\grad(\phi^n)
&=\frac{1}{2}\int_\Omega \epsilon^2\grad \(\phi^{n+1}+\phi^{n}\)\cdot\grad\(\phi^{n+1}-\phi^{n}\)d\x\nonumber\\
&=-\frac{1}{2}\int_\Omega \(\phi^{n+1}-\phi^{n}\)\epsilon^2\Delta \(\phi^{n+1}+\phi^{n}\) d\x,
\end{align}
which yields the classical  semi-implicit chemical potential $\mu_\grad^{n+1}=-\frac{1}{2}\epsilon^2\Delta \(\phi^{n+1}+\phi^{n}\)$.

\subsection{Stabilized energy factorization approach for Flory--Huggins   potential}
In order to treat  the  logarithmic Flory--Huggins   potential,  we introduce a stabilized energy factorization approach, which combines the stability technique with  the energy factorization approach proposed in \cite{kou2019EF} for the  PR-EOS based free energy. To apply the energy factorization approach, we define 
\begin{eqnarray}\label{eqLPfactorization01}
H_a(\phi)=\phi \ln\(\phi\),~~H_b(\phi)=\(1-\phi\)\ln\(1-\phi\),~~H_c(\phi)=\theta\phi\(1- \phi\),
\end{eqnarray}
and from this, we rewrite $F(\phi)$ as
\begin{eqnarray}\label{eqLPfactorization02}
F(\phi)=\(\lambda+1\)\(H_a(\phi) + H_b(\phi)\)+H_c(\phi)-\lambda\(H_a(\phi) + H_b(\phi)\),
\end{eqnarray}
where $\lambda\geq0$ is a constant. The  stability term  is crucial to ensure  the symmetric  positive definiteness and  discrete maximum principle of the resultant scheme in the case with a large $\theta$, which will be demonstrated in Subsection \ref{subsecExistenceFH} and  Subsection \ref{subsecMaximumprincipleFH} respectively.

It is clearly observed  that $H_a(\phi)$  as well as   $H_b(\phi)$  can be factorized into the product of a linear function    and a  logarithm function, while $H_c(\phi)$ is a product of two linear functions. 
 For $\phi^{n}>0$ and $\phi^{n+1}>0$, we can deduce that
 \begin{eqnarray}\label{eqLPfactorization03}
H_a(\phi^{n+1})-H_a(\phi^{n})&=&\phi^{n+1}\ln\(\phi^{n+1})-\phi^{n}\ln(\phi^{n}\)\nonumber\\
&=&\(\phi^{n+1}-\phi^{n}\)\ln(\phi^{n})+\phi^{n+1}\(\ln(\phi^{n+1})-\ln(\phi^{n})\)\nonumber\\
&\leq&\(\ln\(\phi^{n}\)+\frac{\phi^{n+1}}{\phi^{n}}\)\(\phi^{n+1}-\phi^{n}\),
\end{eqnarray}
where the last inequality is obtained  using the concavity of $\ln(\phi)$  as 
$$\ln(\phi^{n+1})\leq\ln(\phi^{n})+\frac{1}{\phi^{n}}\(\phi^{n+1}-\phi^{n}\).$$
Noting that $\ln(1-\phi)$ is also concave, we have
$$\ln(1-\phi^{n+1})\leq\ln(1-\phi^{n})+\frac{1}{\phi^{n}-1}\(\phi^{n+1}-\phi^{n}\).$$
Thus, for $\phi^{n}<1$ and $\phi^{n+1}<1$, we  deduce that
 \begin{eqnarray}\label{eqLPfactorization04}
H_b(\phi^{n+1})-H_b(\phi^{n})&=&\(1-\phi^{n+1}\)\ln\(1-\phi^{n+1})-\(1-\phi^{n}\)\ln(1-\phi^{n}\)\nonumber\\
&=&-\(\phi^{n+1}-\phi^{n}\)\ln(1-\phi^{n})\nonumber\\
&&+\(1-\phi^{n+1}\)\(\ln(1-\phi^{n+1})-\ln(1-\phi^{n})\)\nonumber\\
&\leq&\(-\ln(1-\phi^{n})+\frac{1-\phi^{n+1}}{\phi^{n}-1}\)\(\phi^{n+1}-\phi^{n}\).
\end{eqnarray}
For $H_c(\phi)$, we apply the factorization approach to deal with $H_c(\phi)$ as follows
 \begin{eqnarray}\label{eqLPfactorization05a}
H_c(\phi^{n+1})-H_c(\phi^{n})=\theta\(1-\phi^{n+1}-\phi^{n}\)\(\phi^{n+1}-\phi^{n}\).
\end{eqnarray}
Due to the concavity of $H_a(\phi)$, we have  $\ln(\phi^{n})-\ln(\phi^{n+1})\leq\frac{1}{\phi^{n+1}}\(\phi^{n}-\phi^{n+1}\)$, thus  we deduce 
\begin{eqnarray}\label{eqLPfactorization05b}
H_a(\phi^{n})-H_a(\phi^{n+1}) &=&\phi^{n}\ln\(\phi^{n}\)-\phi^{n+1}\ln\(\phi^{n+1}\)\nonumber\\
&=&\(\phi^{n}-\phi^{n+1}\)\ln(\phi^{n})+\phi^{n+1}\(\ln(\phi^{n})-\ln(\phi^{n+1})\)\nonumber\\
 &\leq&  -\(\ln\(\phi^{n}\)+1\)\(\phi^{n+1}-\phi^{n}\).
\end{eqnarray}
It is similar to deduce that
\begin{eqnarray}\label{eqLPfactorization05c}
H_b(\phi^{n})-H_b(\phi^{n+1})  \leq  \(\ln\(1-\phi^{n}\)+1\)\(\phi^{n+1}-\phi^{n}\).
\end{eqnarray}
Using  \eqref{eqLPfactorization03}-\eqref{eqLPfactorization05c}, we  define the discrete  chemical potential as
\begin{eqnarray}\label{eqLPfactorization05}
f(\phi^n,\phi^{n+1}) &=&\(\lambda+1\)\(\ln\(\phi^{n}\)+\frac{\phi^{n+1}}{\phi^{n}}-\ln(1-\phi^{n})-\frac{1-\phi^{n+1}}{1-\phi^{n}}\)\nonumber\\
&&+\theta\(1-\phi^{n+1}-\phi^{n}\)-\lambda\(\ln\(\phi^{n}\)+1\)+\lambda\(\ln\(1-\phi^{n}\)+1\)\nonumber\\
&=&\ln\(\phi^{n}\)-\ln(1-\phi^{n})+\(\lambda+1\)\(\frac{\phi^{n+1}}{\phi^{n}}-\frac{1-\phi^{n+1}}{1-\phi^{n}}\)\nonumber\\
&&+\theta\(1-\phi^{n+1}-\phi^{n}\).
\end{eqnarray}
From    \eqref{eqLPfactorization02}-\eqref{eqLPfactorization05}, we  obtain the following energy inequality    
\begin{align}\label{eqLPfactorization06}
F(\phi^{n+1})-F(\phi^{n})\leq f(\phi^n,\phi^{n+1})\(\phi^{n+1}-\phi^{n}\).
\end{align}

There is  the other alternative approach for $H_c(\phi)$ on the basis of the concavity of $H_c(\phi)$, which is expressed as  
 \begin{eqnarray}\label{eqLPfactorization05alternative01}
H_c(\phi^{n+1})-H_c(\phi^{n})\leq\theta\(1-2\phi^{n}\)\(\phi^{n+1}-\phi^{n}\).
\end{eqnarray}
In this case, the chemical potential $f(\phi^n,\phi^{n+1})$ is defined as below
\begin{eqnarray}\label{eqLPfactorization05alternative02}
f(\phi^n,\phi^{n+1}) &=&\ln\(\phi^{n}\)-\ln(1-\phi^{n})+\(\lambda+1\)\(\frac{\phi^{n+1}}{\phi^{n}}-\frac{1-\phi^{n+1}}{1-\phi^{n}}\)\nonumber\\
&&+\theta\(1-2\phi^{n}\).
\end{eqnarray}
It is observed from numerical tests that the treatment given in  \eqref{eqLPfactorization05a} performs  better than that of \eqref{eqLPfactorization05alternative01}.

 \subsection{Fully discrete    scheme}
The semi-implicit   discrete    scheme for  the Allen-Cahn equation with the logarithmic Flory--Huggins potential
 is stated as: given   $\phi^{n}\in \mathcal{C}_h$, find   $\phi^{n+1}\in \mathcal{C}_h$ such that
 \begin{equation}\label{eqFullyDiscreteSchmLP}
 \frac{ \phi^{n+1}-\phi^{n}}{\tau}
 -\epsilon^2\Delta_h\phi^{n+1}+f(\phi^n,\phi^{n+1})=0,
 \end{equation}
 where $\tau$ denotes the time step size, $f(\phi^n,\phi^{n+1})$ is  defined in \eqref{eqLPfactorization05} or \eqref{eqLPfactorization05alternative02} and $\phi^0$ is provided  by the initial condition. 
 
 In what follows, we will consider  only $f(\phi^n,\phi^{n+1})$ defined in \eqref{eqLPfactorization05} in  theoretical analysis, but the   scheme with \eqref{eqLPfactorization05alternative02} can be    analyzed in a very similar routine.

For convenience of theoretical analysis, using \eqref{eqLPfactorization05},  we rewrite \eqref{eqFullyDiscreteSchmLP} as the following equivalent from 
\begin{equation}\label{eqFullyDiscreteSchmLPEqt}
\frac{1}{\tau}\(\phi^{n+1}-\phi^{n}\)-\epsilon^2 \Delta_h \phi^{n+1}+\nu_{\theta,\lambda}(\phi^{n})\phi^{n+1}=r_{\theta,\lambda}(\phi^{n}),
\end{equation}
where  $\nu_{\theta,\lambda}(\phi)$ and $r_{\theta,\lambda}(\phi)$ are defined as follows
\begin{equation}\label{eqDiscreteEqnsLinTermLP}
\nu_{\theta,\lambda}(\phi)=\(\lambda+1\)\(\frac{1}{\phi}+\frac{1}{1-\phi}\)-\theta,
\end{equation}
\begin{align}\label{eqDiscreteEqnsSourceTermLP}
r_{\theta,\lambda}(\phi)&=-\ln\(\phi\)+\ln(1-\phi)+\frac{\lambda+1}{1-\phi}
-\theta\(1-\phi\).
\end{align}
\begin{rem}
Since the linear discrete chemical potentials are  obtained using the energy factorization approach,   the proposed discrete scheme is   linear,  easy to implement, and preserve the original energy dissipation law.    However, the commonly used convex splitting approach for  the  logarithmic Flory--Huggins   potential results in the nonlinear discrete chemical potential as well as the nonlinear schemes \cite{chen2017LP}.
\end{rem}
 
 \section{Theoretical analysis}\label{secanalysis}
In this section,  we will prove the well-posedness and maximum principle of the discrete solution.  In particular, we will   demonstrate that $\lambda$ is essential to ensure the maximum principle. The unconditional energy stability of the proposed scheme will be proved as well. 

\subsection{Existence   and uniqueness}\label{subsecExistenceFH}

For given $\phi^{n}$, we define the linear operator as follows
\begin{equation}\label{eqSPDoperatorLP}
\Q^n_{\theta,\lambda}=-\epsilon^2 \Delta_h +\nu_{\theta,\lambda}(\phi^{n}).
\end{equation}

\begin{lem}\label{lemSPDoperatorLP}
Assume  that $0<\phi^{n}< 1$. For given $\theta>2$, if $\lambda$ is chosen such that  
\begin{equation}\label{eqlambdaCondition}
\lambda\geq0 ~~\textnormal{and}~~  \lambda>\frac{1}{4}\theta-1,
\end{equation}
 then $\Q^n_{\theta,\lambda}$ is symmetric and positive definite.
\end{lem}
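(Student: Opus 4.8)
The plan is to decompose $\Q^n_{\theta,\lambda} = -\epsilon^2\Delta_h + \nu_{\theta,\lambda}(\phi^n)$ into its second-order part and its zeroth-order (multiplication) part and handle each separately. First I would dispose of symmetry. By the discrete variational formula $-\(\Delta_h\phi,\varphi\)_h = \(\grad_h\phi,\grad_h\varphi\)_h$ for all $\phi,\varphi\in\mathcal{C}_h$, the operator $-\epsilon^2\Delta_h$ is symmetric with respect to $\(\cdot,\cdot\)_h$; and since $\nu_{\theta,\lambda}(\phi^n)$ acts as pointwise multiplication by a fixed grid function, $\(\nu_{\theta,\lambda}(\phi^n)\psi,\varphi\)_h = \(\psi,\nu_{\theta,\lambda}(\phi^n)\varphi\)_h$ trivially. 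Hence $\Q^n_{\theta,\lambda}$ is symmetric.

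Next I would establish positive definiteness by examining the associated quadratic form. For arbitrary $\psi\in\mathcal{C}_h$ we have $\(\Q^n_{\theta,\lambda}\psi,\psi\)_h = \epsilon^2\|\grad_h\psi\|_h^2 + \(\nu_{\theta,\lambda}(\phi^n)\psi,\psi\)_h$, and the gradient term is nonnegative. The crux is a pointwise lower bound on $\nu_{\theta,\lambda}$: since $0<\phi^n<1$ on every cell, the elementary inequality $\phi^n(1-\phi^n)\leq\tfrac14$ (AM--GM) gives $\frac{1}{\phi^n}+\frac{1}{1-\phi^n}=\frac{1}{\phi^n(1-\phi^n)}\geq 4$, so $\nu_{\theta,\lambda}(\phi^n)=(\lambda+1)\(\frac{1}{\phi^n}+\frac{1}{1-\phi^n}\)-\theta\geq 4(\lambda+1)-\theta$ on every cell. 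Under the hypothesis $\lambda>\tfrac14\theta-1$ this lower bound is strictly positive, whence $\(\Q^n_{\theta,\lambda}\psi,\psi\)_h \geq \bigl(4(\lambda+1)-\theta\bigr)\|\psi\|_h^2 > 0$ whenever $\psi\neq 0$. This proves that $\Q^n_{\theta,\lambda}$ is positive definite.

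There is no genuine obstacle here; the only point requiring care is that the bound $\frac{1}{\phi^n(1-\phi^n)}\geq4$ must be applied uniformly over all grid cells, which is exactly guaranteed by the standing assumption $0<\phi^n<1$ (this is one reason the discrete maximum principle, proved later, is indispensable for the scheme to be well posed at the next step). It is also worth remarking that this is precisely where the stabilization parameter $\lambda$ plays its role: with $\lambda=0$ the lower bound $4-\theta$ becomes nonpositive once $\theta\geq4$, so symmetry remains but positive definiteness — and hence unique solvability of the linear system \eqref{eqFullyDiscreteSchmLPEqt} — may fail for a large energy parameter, as anticipated in the introduction. The extra requirement $\lambda\geq0$ is not needed for this lemma (only $\lambda>\tfrac14\theta-1$ is used) but is retained because it enters the proof of the maximum principle in Subsection~\ref{subsecMaximumprincipleFH}.
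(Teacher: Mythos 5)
Your proof is correct and follows essentially the same route as the paper: symmetry via the discrete variational formula for $\Delta_h$, and positive definiteness via the pointwise bound $\nu_{\theta,\lambda}(\phi^n)\geq 4(\lambda+1)-\theta>0$ applied to the quadratic form. The only cosmetic difference is that you justify $\frac{1}{s}+\frac{1}{1-s}\geq 4$ by the identity $\frac{1}{s(1-s)}$ and AM--GM, whereas the paper checks the sign of the derivative of $\xi(s)=\frac{1}{s}+\frac{1}{1-s}$; your side remarks (that $\lambda+1>0$ is what matters here and that $\lambda\geq0$ is reserved for the maximum-principle argument) are accurate.
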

\begin{proof}
$\Q^n_{\theta,\lambda}$ is symmetric since 
for $\phi,\varphi\in  \mathcal{C}_h$, we have
\begin{align}\label{eqSPDoperatorLPproof01}
\(\Q^n_{\theta,\lambda}\phi,\varphi\)_h&=-\epsilon^2 \(\Delta_h\phi,\varphi\)_h +\(\nu_{\theta,\lambda}(\phi^{n})\phi,\varphi\)_h\nonumber\\
&=\epsilon^2 \(\grad_h\phi,\grad_h\varphi \)_h +\(\nu_{\theta,\lambda}(\phi^{n})\phi,\varphi\)_h\nonumber\\
&=-\epsilon^2 \(\Delta_h\varphi,\phi\)_h +\(\nu_{\theta,\lambda}(\phi^{n})\varphi,\phi\)_h=\(\Q^n_{\theta,\lambda}\varphi,\phi\)_h.
\end{align}
 We now prove that $\Q^n_{\theta,\lambda}$ is positive definite.
Let $\xi(s)=\frac{1}{s}+\frac{1}{1-s}$ for a scalar number $0< s< 1$, then the derivative of $\xi(s)$ is expressed as
\begin{equation*}\label{eqSPDoperatorLPproof02a}
\xi(s)=-\frac{1}{s^2}+\frac{1}{(1-s)^2}.
\end{equation*}
We observe that $\xi'(s)\leq0$ for $s\in(0,\frac{1}{2}]$ and $\xi'(s)\geq0$ for $s\in[\frac{1}{2},1)$, thus  
\begin{equation}\label{eqSPDoperatorLPproof02}
 \nu_{\theta,\lambda}(s)\geq4\(\lambda+1\)-\theta>0.
\end{equation}
For $\phi\in  \mathcal{C}_h$ and $\phi\neq0$, we deduce that
\begin{align}\label{eqSPDoperatorLPproof03}
\(\Q^n_{\theta,\lambda}\phi,\phi\)_h&= \epsilon^2 \(\grad_h\phi,\grad_h\phi \)_h +\(\nu_{\theta,\lambda}(\phi^{n})\phi,\phi\)_h\nonumber\\
&\geq \epsilon^2 \|\grad_h\phi\|^2_h +\(4\(\lambda+1\)-\theta\)\|\phi\|_h^2>0,
\end{align}
which yields the positive definiteness of $\Q^n_{\theta,\lambda}$.
\end{proof}

\begin{rem}
For $2<\theta<4$, the condition \eqref{eqlambdaCondition} holds if  $\lambda=0$. But  for $\theta\geq4$, a positive stability constant   satisfying  $\lambda>\frac{1}{4}\theta-1$ is required to ensure the  positive definiteness of
$\Q^n_{\theta,\lambda}$.  We also note that the linear system of the scheme with \eqref{eqLPfactorization05alternative02} is symmetric and positive definite for any $\theta>2$ and $\lambda\geq0$.
 \end{rem}
 
 We now prove the existence   and uniqueness of the discrete solution of \eqref{eqFullyDiscreteSchmLPEqt}.
\begin{thm}\label{thmExistenceOfSolutionLP}
Assume  that $0< \phi^{n}< 1$. For any time step size $\tau>0$, there exists a unique      $\phi^{n+1}$ to solve
\eqref{eqFullyDiscreteSchmLPEqt}   in $\mathcal{C}_h$ provided that $\lambda$ is chosen to satisfy  \eqref{eqlambdaCondition}. 
\end{thm}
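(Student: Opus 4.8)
The plan is to reduce the existence and uniqueness question to a standard fact about linear systems on the finite-dimensional space $\mathcal{C}_h$: the scheme \eqref{eqFullyDiscreteSchmLPEqt} is a linear equation $\mathcal{L}^n \phi^{n+1} = r_{\theta,\lambda}(\phi^n)$, where the operator is
\begin{equation*}
\mathcal{L}^n = \frac{1}{\tau}\mathcal{I} + \Q^n_{\theta,\lambda} = \frac{1}{\tau}\mathcal{I} - \epsilon^2 \Delta_h + \nu_{\theta,\lambda}(\phi^n).
\end{equation*}
Since $\mathcal{C}_h$ is finite-dimensional, it suffices to show $\mathcal{L}^n$ is injective (equivalently, that its only null vector is zero), and then the unique solution $\phi^{n+1} = (\mathcal{L}^n)^{-1} r_{\theta,\lambda}(\phi^n)$ exists automatically.

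First I would invoke Lemma \ref{lemSPDoperatorLP}: under the hypothesis $0<\phi^n<1$ and the choice of $\lambda$ satisfying \eqref{eqlambdaCondition}, the operator $\Q^n_{\theta,\lambda}$ is symmetric and positive definite. Next, observe that adding $\frac{1}{\tau}\mathcal{I}$ with $\tau>0$ preserves symmetry and only strengthens positive definiteness: for any $0\neq\phi\in\mathcal{C}_h$,
\begin{equation*}
\(\mathcal{L}^n\phi,\phi\)_h = \frac{1}{\tau}\|\phi\|_h^2 + \(\Q^n_{\theta,\lambda}\phi,\phi\)_h \geq \frac{1}{\tau}\|\phi\|_h^2 + \epsilon^2\|\grad_h\phi\|_h^2 + \(4(\lambda+1)-\theta\)\|\phi\|_h^2 > 0,
\end{equation*}
using the estimate \eqref{eqSPDoperatorLPproof03} from the proof of Lemma \ref{lemSPDoperatorLP}. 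Hence $\mathcal{L}^n$ is SPD, in particular injective, hence bijective on $\mathcal{C}_h$, which gives existence and uniqueness of $\phi^{n+1}$ for every $\tau>0$. One should also note that $r_{\theta,\lambda}(\phi^n)$ is a well-defined element of $\mathcal{C}_h$ precisely because $0<\phi^n<1$ makes $\ln(\phi^n)$, $\ln(1-\phi^n)$ and $\tfrac{1}{1-\phi^n}$ finite at every grid point.

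I do not anticipate a genuine obstacle here — the statement is essentially a corollary of Lemma \ref{lemSPDoperatorLP}, since positive definiteness of a symmetric operator on a finite-dimensional inner-product space immediately yields invertibility. The only point requiring the slightest care is the bookkeeping that $\frac{1}{\tau}\mathcal{I}$ leaves the coercivity bound intact for all $\tau>0$ (so that no time-step restriction enters), and that the right-hand side lies in the correct space; both are immediate. If one prefers a variational phrasing, $\phi^{n+1}$ can equivalently be characterized as the unique minimizer over $\mathcal{C}_h$ of the strictly convex quadratic functional $\phi\mapsto \frac{1}{2}\(\mathcal{L}^n\phi,\phi\)_h - \(r_{\theta,\lambda}(\phi^n),\phi\)_h$, whose existence and uniqueness follow from strict convexity and coercivity.
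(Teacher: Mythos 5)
Your proposal is correct and follows essentially the same route as the paper: the paper also reduces the claim to showing that the homogeneous equation $\frac{1}{\tau}\phi+\Q^n_{\theta,\lambda}\phi=0$ has only the zero solution in the finite-dimensional space $\mathcal{C}_h$, which is immediate from the symmetric positive definiteness of $\Q^n_{\theta,\lambda}$ established in Lemma \ref{lemSPDoperatorLP}. Your additional remarks on the well-definedness of the right-hand side and the variational characterization are fine but not needed.
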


\begin{proof}
It suffices to prove that the following homogeneous equation has a unique zero solution   in $\mathcal{C}_h$
\begin{equation}\label{eqExistenceOfSolutionLPproof01}
\frac{1}{\tau}\phi-\epsilon^2 \Delta_h \phi+\nu_{\theta,\lambda}(\phi^{n})\phi=0,
\end{equation}
which  can be rewritten as
\begin{equation}\label{eqExistenceOfSolutionLPproof02}
\frac{1}{\tau}\phi+\Q^n_{\theta,\lambda}\phi=0.
\end{equation}
As a direct consequence of the symmetric positive definiteness of $\Q^n_{\theta,\lambda}$,  there is a unique solution $\phi\equiv0$. 
\end{proof}

\subsection{Discrete maximum principle}\label{subsecMaximumprincipleFH}

The maximum principle of the discrete solution essentially relies on the stability constant for given energy parameter $\theta$, thus we need to employ the proper stability constant $\lambda$ to ensure this key property.  We denote 
\begin{equation}\label{eqMaximumPrincipleConditionLower}
  L(\theta,\lambda)=\min_{0< \phi< 1} r_{\theta,\lambda}(\phi ),
\end{equation}
\begin{equation}\label{eqMaximumPrincipleConditionUpper}
 U(\theta,\lambda)=\max_{0< \phi< 1}\big(r_{\theta,\lambda}(\phi)-\nu_{\theta,\lambda}(\phi)\big).
\end{equation}
In order to ensure the maximum principle of the proposed scheme,  we introduce the following condition: for given  $\theta>2$,  $\lambda$ shall be taken to satisfy \eqref{eqlambdaCondition} and
\begin{equation}\label{eqMaximumPrincipleCondition}
 L(\theta,\lambda)>0,~~U(\theta,\lambda)<0.
\end{equation}
\begin{rem}
  We illustrate the reasonability of the condition \eqref{eqMaximumPrincipleCondition} in Figure \ref{LogarithmicFloryHugginsACeqMaximumPrincipleConditions}. Here,   the values of $\lambda$ are taken as follows
 \begin{equation}\label{eqMaximumPrincipleConditionlambda}
\lambda=\left\{\begin{array}{ccc}
0, & 2<\theta\leq3, \\
1, & 3<\theta\leq4.5, \\
2, & 4.5<\theta\leq6.
\end{array}\right.
\end{equation}
It can be observed from Figure \ref{LogarithmicFloryHugginsACeqMaximumPrincipleConditions} that the condition \eqref{eqMaximumPrincipleCondition} naturally holds for $ 2<\theta\leq3$ without any stability term,  while it is still  true  for $ 3<\theta\leq6$ when we take the small values (1 and 2) for the stability constant $\lambda$ as in \eqref{eqMaximumPrincipleConditionlambda}. We have also checked some cases of $ \theta>6$ and found  that   the condition \eqref{eqMaximumPrincipleCondition} is true if $\lambda$ is properly chosen. We also emphasize that \eqref{eqMaximumPrincipleCondition} is a good guide to choose $\lambda$ since it is clearly independent of discrete solutions.  For the scheme with \eqref{eqLPfactorization05alternative02}, we have the similar condition for the choice of the stability constant. 
\end{rem}
 \begin{figure}
           \centering \subfigure[]{
            \begin{minipage}[b]{0.5\textwidth}
            \centering
             \includegraphics[width=0.95\textwidth,height=0.65\textwidth]{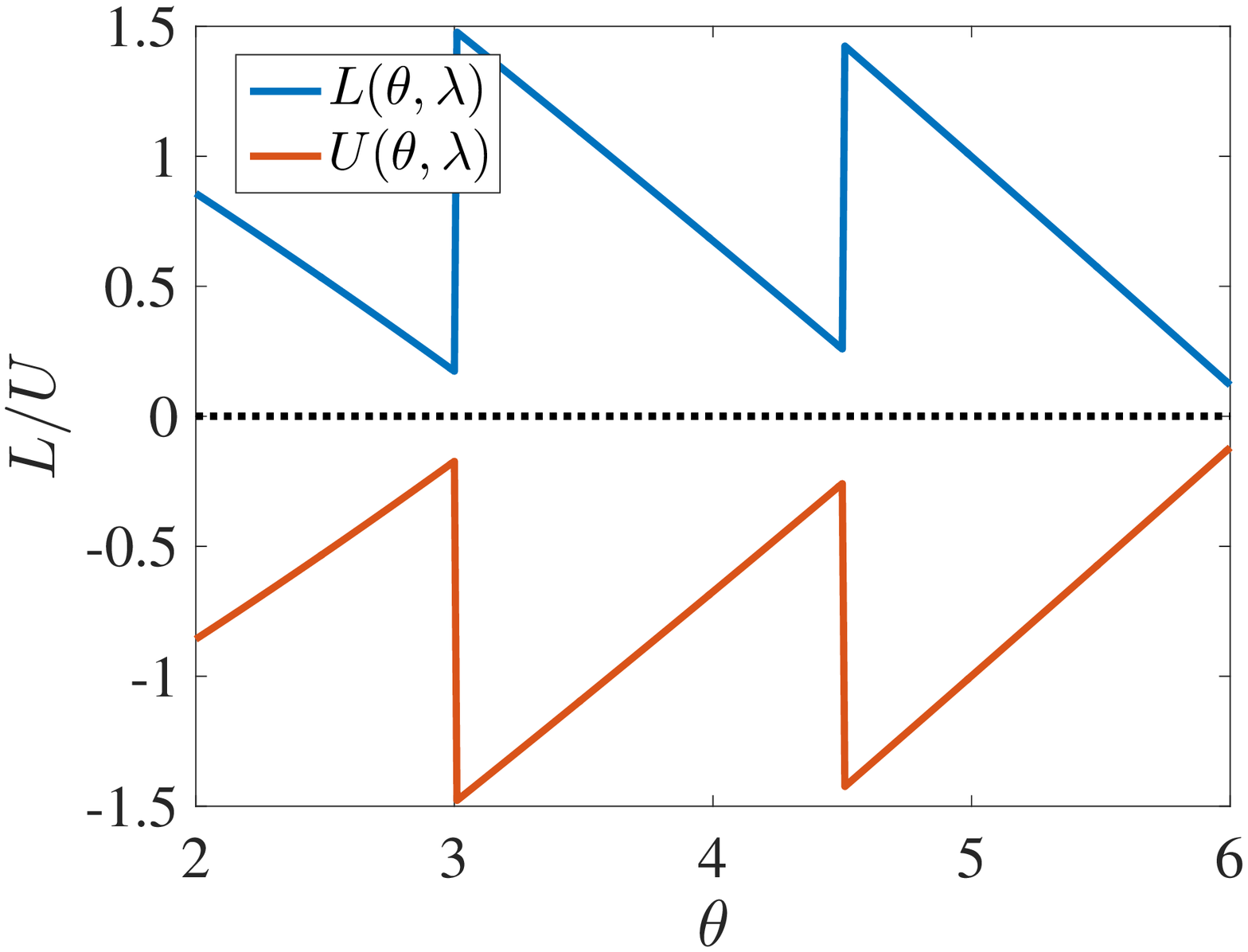}
             \label{LogarithmicFloryHugginsACeqMaximumPrincipleConditions}
            \end{minipage}
            }
            \caption{Verification of the condition \eqref{eqMaximumPrincipleCondition}.}
            \label{LogarithmicFloryHugginsACeqMaximumPrincipleConditions}
 \end{figure}

The following theorem demonstrates   that the scheme \eqref{eqFullyDiscreteSchmLPEqt} with the appropriate stability constant preserves the discrete maximum principle. 
\begin{thm}\label{thmMaximumPrincipleLP}
Assume  that $0< \phi^{0}<1,$ and $\lambda $ is chosen to satisfy \eqref{eqlambdaCondition} and \eqref{eqMaximumPrincipleCondition} for given $\theta>2$. For any time step size, the solutions of the scheme \eqref{eqFullyDiscreteSchmLPEqt} satisfy $0< \phi^{n}< 1$, $n\geq1$.
\end{thm}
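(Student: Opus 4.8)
The plan is to argue by induction on $n$, the case $n=0$ being the hypothesis $0<\phi^0<1$. So suppose $0<\phi^n<1$; then $\nu_{\theta,\lambda}(\phi^n)$ and $r_{\theta,\lambda}(\phi^n)$ are well defined, Theorem~\ref{thmExistenceOfSolutionLP} gives the unique solution $\phi^{n+1}\in\mathcal{C}_h$ of \eqref{eqFullyDiscreteSchmLPEqt}, and it remains to prove the pointwise bounds $\phi^{n+1}>0$ and $\phi^{n+1}<1$. I would obtain each bound by pairing \eqref{eqFullyDiscreteSchmLPEqt} in the discrete inner product $(\cdot,\cdot)_h$ with a truncation of $\phi^{n+1}$ of the kind provided by Lemma~\ref{lemDiscreteLapalace}, so that \eqref{eqDiscreteLapalace01A} or \eqref{eqDiscreteLapalace01B} disposes of the Laplacian term.

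\textbf{Lower bound.} I would first test \eqref{eqFullyDiscreteSchmLPEqt} against $\phi_-:=\min(\phi^{n+1},0)$ (that is, $a=0$ in Lemma~\ref{lemDiscreteLapalace}). Using $\phi^{n+1}\phi_-=\phi_-^2$ pointwise, the time term contributes $\tfrac1\tau\|\phi_-\|_h^2-\tfrac1\tau(\phi^n,\phi_-)_h$ with $-(\phi^n,\phi_-)_h\ge0$ since $\phi^n>0$ and $\phi_-\le0$; the diffusion term contributes $-\epsilon^2(\Delta_h\phi^{n+1},\phi_-)_h\ge\epsilon^2\|\grad_h\phi_-\|_h^2\ge0$ by \eqref{eqDiscreteLapalace01A}; the reaction term contributes $(\nu_{\theta,\lambda}(\phi^n)\phi^{n+1},\phi_-)_h=(\nu_{\theta,\lambda}(\phi^n),\phi_-^2)_h\ge0$ because $\nu_{\theta,\lambda}(\phi^n)\ge4(\lambda+1)-\theta>0$ under \eqref{eqlambdaCondition}, exactly as in the proof of Lemma~\ref{lemSPDoperatorLP}; and the right-hand side is $(r_{\theta,\lambda}(\phi^n),\phi_-)_h\le0$ because $r_{\theta,\lambda}(\phi^n)\ge L(\theta,\lambda)>0$ by \eqref{eqMaximumPrincipleCondition} while $\phi_-\le0$. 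This forces $\|\phi_-\|_h^2\le0$, hence $\phi^{n+1}\ge0$. To sharpen this to $\phi^{n+1}>0$, I would rerun the same computation with the shifted truncation $\min(\phi^{n+1}-\delta,0)$ for a small $\delta>0$: since $\phi^n$ takes finitely many values in $(0,1)$, the grid-maximum of $\nu_{\theta,\lambda}(\phi^n)$ is finite, and for $\delta$ chosen small enough relative to that maximum, to $L(\theta,\lambda)$ and to $\tau$, the right-hand side stays nonpositive, giving $\phi^{n+1}\ge\delta>0$. (Alternatively, evaluating \eqref{eqFullyDiscreteSchmLPEqt} at the grid point where $\phi^{n+1}$ is smallest, where the CCFD discrete Laplacian is $\ge0$, yields at once $(\tfrac1\tau+\nu_{\theta,\lambda}(\phi^n))\phi^{n+1}_{\min}\ge\tfrac1\tau\phi^n+L(\theta,\lambda)>0$.)

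\textbf{Upper bound.} I would test \eqref{eqFullyDiscreteSchmLPEqt} against $\phi_+:=\max(\phi^{n+1}-1,0)$ (that is, $b=1$ in Lemma~\ref{lemDiscreteLapalace}). On the support of $\phi_+$ one has $\phi^{n+1}=\phi_++1$, so the time term gives $\tfrac1\tau\|\phi_+\|_h^2+\tfrac1\tau(1-\phi^n,\phi_+)_h$ with $(1-\phi^n,\phi_+)_h\ge0$ since $\phi^n<1$ and $\phi_+\ge0$; the diffusion term gives $-\epsilon^2(\Delta_h\phi^{n+1},\phi_+)_h\ge\epsilon^2\|\grad_h\phi_+\|_h^2\ge0$ by \eqref{eqDiscreteLapalace01B}; and the reaction term gives $(\nu_{\theta,\lambda}(\phi^n),\phi_+^2)_h+(\nu_{\theta,\lambda}(\phi^n),\phi_+)_h$. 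The decisive move is to carry $(\nu_{\theta,\lambda}(\phi^n),\phi_+)_h$ over to the source side, producing $(r_{\theta,\lambda}(\phi^n)-\nu_{\theta,\lambda}(\phi^n),\phi_+)_h\le0$, where the sign now comes from $r_{\theta,\lambda}(\phi^n)-\nu_{\theta,\lambda}(\phi^n)\le U(\theta,\lambda)<0$ by \eqref{eqMaximumPrincipleCondition} together with $\phi_+\ge0$. Keeping the nonnegative terms on the left then leaves $\|\phi_+\|_h^2\le0$, i.e.\ $\phi^{n+1}\le1$. Strictness $\phi^{n+1}<1$ follows as before, either with the shifted truncation $\max(\phi^{n+1}-(1-\delta),0)$ for small $\delta>0$, or by evaluating \eqref{eqFullyDiscreteSchmLPEqt} at the grid point where $\phi^{n+1}$ is largest, where the discrete Laplacian is $\le0$ and $r_{\theta,\lambda}(\phi^n)-\nu_{\theta,\lambda}(\phi^n)<0<\tfrac1\tau(1-\phi^n)$.

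The step I expect to be the main obstacle is the upper bound: whereas the lower bound is essentially forced by $r_{\theta,\lambda}>0$, a naive pairing for the upper bound yields a right-hand side of the wrong sign, and one must spot the regrouping of the reaction and source contributions that makes the quantity $r_{\theta,\lambda}-\nu_{\theta,\lambda}$ surface — which is precisely why the maximum-principle condition \eqref{eqMaximumPrincipleCondition} is two-sided, with $U(\theta,\lambda)<0$ controlled (for large $\theta$) by enlarging $\lambda$. A secondary delicate point is upgrading the closed bounds $0\le\phi^{n+1}\le1$ to the open bounds needed to close the induction, which forces either the shifted truncations (justified by the general $a<b$ form of Lemma~\ref{lemDiscreteLapalace}) together with the finiteness of $\nu_{\theta,\lambda}(\phi^n)$ on the grid, or a direct pointwise estimate at the discrete extrema of $\phi^{n+1}$.
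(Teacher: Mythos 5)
Your proposal is correct and follows essentially the same route as the paper: induction, testing the scheme against $\min(\phi^{n+1},0)$ and $\max(\phi^{n+1}-1,0)$, using Lemma~\ref{lemDiscreteLapalace} for the Laplacian, the lower bound $\nu_{\theta,\lambda}\geq 4(\lambda+1)-\theta>0$ for the reaction term, the two-sided condition \eqref{eqMaximumPrincipleCondition} (including the regrouping that produces $r_{\theta,\lambda}-\nu_{\theta,\lambda}$ for the upper bound), and a pointwise sign argument at a discrete extremum to upgrade the closed bounds to strict ones. Your second option for strictness (evaluating the scheme at the extremal grid point, where the discrete Laplacian has a definite sign) is exactly the paper's contradiction argument.
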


\begin{proof}
By induction,  assuming $0< \phi^{n}< 1$, we prove $0< \phi^{n+1}< 1$.  There exists a unique $\phi^{n+1}\in \mathcal{C}_h$  in terms of Theorem \ref{thmExistenceOfSolutionLP}.  Let us define $\phi_{-}^{n+1}=\min(\phi^{n+1},0)$.  It is obtained from  \eqref{eqFullyDiscreteSchmLPEqt}   that
\begin{align}\label{thmMaximumPrincipleLPProof01}
&\frac{1}{\tau}\(\phi^{n+1}-\phi^{n},\phi_{-}^{n+1}\)_h-\epsilon^2 \(\Delta_h \phi^{n+1},\phi_{-}^{n+1}\)_h
+\(\nu_{\theta,\lambda}(\phi^{n})\phi^{n+1},\phi_-^{n+1}\)_h\nonumber\\
&~~~=\(r_{\theta,\lambda}(\phi^{n}),\phi_-^{n+1}\)_h.
\end{align}
The definition of $\phi_{-}^{n+1}$ yields
\begin{align}\label{thmMaximumPrincipleLPProof02}
\(\phi^{n+1},\phi_{-}^{n+1}\)_h=\|\phi_{-}^{n+1}\|_h^2.
\end{align}
Moreover, it is apparent that  $\phi_{-}^{n+1}\leq0$. Due to $\phi^{n}> 0$, we have
\begin{align}\label{thmMaximumPrincipleLPProof03}
\(\phi^{n},\phi_{-}^{n+1}\)_h\leq0.
\end{align}
Thus the first term on the left-hand side of \eqref{thmMaximumPrincipleLPProof01} is bounded as 
\begin{align}\label{thmMaximumPrincipleLPProof04}
\(\phi^{n+1}-\phi^{n},\phi_{-}^{n+1}\)_h=\|\phi_{-}^{n+1}\|_h^2-\(\phi^{n},\phi_{-}^{n+1}\)_h\geq\|\phi_{-}^{n+1}\|_h^2.
\end{align}
Applying \eqref{eqDiscreteLapalace01A}, the second term on the left-hand side of \eqref{thmMaximumPrincipleLPProof01} is bounded as 
\begin{align}\label{thmMaximumPrincipleLPProof05}
-\epsilon^2 \(\Delta_h \phi^{n+1},\phi_{-}^{n+1}\)_h\geq\epsilon^2\|\grad_h \phi_{-}^{n+1}\|_h^2.
\end{align}
The   third term  on the left-hand side of \eqref{thmMaximumPrincipleLPProof01}  is bounded below  
\begin{align}\label{thmMaximumPrincipleLPProof06}
\(\nu_{\theta,\lambda}(\phi^{n})\phi^{n+1},\phi_-^{n+1}\)_h\geq\(4\(\lambda+1\)-\theta\)\|\phi_-^{n+1}\|_h^2.
\end{align}
Using the condition \eqref{eqMaximumPrincipleCondition} and taking into account $\phi_{-}^{n+1}\leq0$,    the right-hand side of \eqref{thmMaximumPrincipleLPProof01}  is estimated   as
\begin{align}\label{thmMaximumPrincipleLPProof07}
\(r_{\theta,\lambda}(\phi^{n}),\phi_-^{n+1}\)_h
\leq\(L(\theta,\lambda),\phi_-^{n+1}\)_h  \leq0.
\end{align}
We combine \eqref{thmMaximumPrincipleLPProof02}-\eqref{thmMaximumPrincipleLPProof07} and obtain
\begin{align}\label{thmMaximumPrincipleLPProof08}
\frac{1}{\tau}\|\phi_{-}^{n+1}\|_h^2+\epsilon^2\|\grad_h \phi_{-}^{n+1}\|_h^2+\(4\(\lambda+1\)-\theta\)\|\phi_-^{n+1}\|_h^2
\leq0.
\end{align}
It follows from \eqref{thmMaximumPrincipleLPProof08} that
 $\|\phi_{-}^{n+1}\|_h^2=0$ and consequently, we get $\phi^{n+1}\geq 0$.
 We further prove that $\phi^{n+1}> 0$   by contradiction. Suppose that there exists at least a grid point  $(i_0+\frac{1}{2},j_0+\frac{1}{2})$  such that    $\phi^{n+1}_{i_0+\frac{1}{2},j_0+\frac{1}{2}}=0$, then the equation \eqref{eqFullyDiscreteSchmLPEqt} at  $(i_0+\frac{1}{2},j_0+\frac{1}{2})$ has the form
 \begin{align}\label{thmMaximumPrincipleLPProof08a}
-\frac{1}{\tau}\phi^{n}_{i_0+\frac{1}{2},j_0+\frac{1}{2}}-\epsilon^2 \Delta_h \phi_{i_0+\frac{1}{2},j_0+\frac{1}{2}}^{n+1}=r_{\theta,\lambda}\big(\phi^{n}_{i_0+\frac{1}{2},j_0+\frac{1}{2}}\big).
\end{align}
The right-hand side of  \eqref{thmMaximumPrincipleLPProof08a} can be estimated using \eqref{eqMaximumPrincipleCondition} as 
 \begin{align}\label{thmMaximumPrincipleLPProof08b}
r_{\theta,\lambda}\big(\phi^{n}_{i_0+\frac{1}{2},j_0+\frac{1}{2}}\big)  \geq  L(\theta,\lambda)>0.
\end{align}
For the two terms on the left-hand side of  \eqref{thmMaximumPrincipleLPProof08a}, we obtain 
 \begin{equation}\label{thmMaximumPrincipleLPProof08c}
-\frac{1}{\tau}\phi^{n}_{i_0+\frac{1}{2},j_0+\frac{1}{2}}-\epsilon^2 \Delta_h \phi_{i_0+\frac{1}{2},j_0+\frac{1}{2}}^{n+1} <0,
\end{equation}
which contradict  \eqref{thmMaximumPrincipleLPProof08b}.  Therefore, $\phi^{n+1}> 0$.

To prove $\phi^{n+1}< 1$, we define $\phi_{+}^{n+1}=\max(\phi^{n+1}-1,0)$.
We get from \eqref{eqFullyDiscreteSchmLPEqt} that
\begin{align}\label{thmMaximumPrincipleLPProof09}
&\frac{1}{\tau}\(\phi^{n+1}-\phi^{n},\phi_{+}^{n+1}\)_h-\epsilon^2 \(\Delta_h \phi^{n+1},\phi_{+}^{n+1}\)_h
+\(\nu_{\theta,\lambda}(\phi^{n})(\phi^{n+1}-1),\phi_+^{n+1}\)_h\nonumber\\
&=\(r_{\theta,\lambda}(\phi^n)-\nu_{\theta,\lambda}(\phi^n), \phi_+^{n+1}\)_h.
\end{align}
The definition of $\phi_{+}^{n+1}$ implies  that
\begin{align}\label{thmMaximumPrincipleLPProof10}
\phi_{+}^{n+1}\geq0,~~\(\phi^{n+1}-1,\phi_{+}^{n+1}\)_h=\|\phi_{+}^{n+1}\|_h^2.
\end{align}
Taking into account $\phi^{n}< 1$, we have
\begin{align}\label{thmMaximumPrincipleLPProof11}
\(\phi^{n}-1,\phi_{+}^{n+1}\)_h\leq0.
\end{align}
 The first term of \eqref{thmMaximumPrincipleLPProof09} is bounded below using \eqref{thmMaximumPrincipleLPProof10} and \eqref{thmMaximumPrincipleLPProof11}
\begin{align}\label{thmMaximumPrincipleLPProof12}
\(\phi^{n+1}-\phi^{n},\phi_{+}^{n+1}\)_h=\|\phi_{+}^{n+1}\|_h^2-\(\phi^{n}-1,\phi_{+}^{n+1}\)_h\geq\|\phi_{+}^{n+1}\|_h^2.
\end{align}
Similar to \eqref{thmMaximumPrincipleLPProof05} and \eqref{thmMaximumPrincipleLPProof06}, we have
\begin{align}\label{thmMaximumPrincipleLPProof13}
-\epsilon^2 \(\Delta_h \phi^{n+1},\phi_{+}^{n+1}\)_h\geq\epsilon^2\|\grad_h \phi_{+}^{n+1}\|_h^2,
\end{align}
\begin{align}\label{thmMaximumPrincipleLPProof14}
\(\nu_{\theta,\lambda}(\phi^{n})(\phi^{n+1}-1),\phi_+^{n+1}\)_h \geq \(4\(\lambda+1\)-\theta\)\|\phi_+^{n+1}\|_h^2,
\end{align}
where \eqref{thmMaximumPrincipleLPProof13} results from \eqref{eqDiscreteLapalace01B}.
Applying the condition \eqref{eqMaximumPrincipleCondition} and   $\phi_{+}^{n+1}\geq0$, we derive
\begin{align}\label{thmMaximumPrincipleLPProof15}
\(r_{\theta,\lambda}(\phi^n)-\nu_{\theta,\lambda}(\phi^n), \phi_+^{n+1}\)_h
\leq\(U(\theta,\lambda),\phi_+^{n+1}\)_h \leq 0.
\end{align}
Finally, we reach 
\begin{align}\label{thmMaximumPrincipleLPProof16}
\frac{1}{\tau}\|\phi_{+}^{n+1}\|_h^2+\epsilon^2\|\grad_h \phi_{+}^{n+1}\|_h^2+\(4\(\lambda+1\)-\theta\)\|\phi_+^{n+1}\|_h^2
\leq0,
\end{align}
which implies that
 $\|\phi_{+}^{n+1}\|_h^2=0$, and thus,  $\phi^{n+1}\leq 1$.  We further prove that $\phi^{n+1}\neq 1$ in any grid cell  by contradiction. Suppose that      $\phi^{n+1}_{i_0+\frac{1}{2},j_0+\frac{1}{2}}=1$, and then the equation \eqref{eqFullyDiscreteSchmLPEqt} at  $(i_0+\frac{1}{2},j_0+\frac{1}{2})$ becomes
 \begin{equation}\label{thmMaximumPrincipleLPProof17}
\frac{1}{\tau}\(1-\phi^{n}_{i_0+\frac{1}{2},j_0+\frac{1}{2}}\)-\epsilon^2 \Delta_h \phi_{i_0+\frac{1}{2},j_0+\frac{1}{2}}^{n+1} =r_{\theta,\lambda}\big(\phi^{n}_{i_0+\frac{1}{2},j_0+\frac{1}{2}}\big)-\nu_{\theta,\lambda}\big(\phi^{n}_{i_0+\frac{1}{2},j_0+\frac{1}{2}}\big).
\end{equation}
The right-hand side of  \eqref{thmMaximumPrincipleLPProof17} can be estimated using \eqref{eqMaximumPrincipleCondition} as 
 \begin{align}\label{thmMaximumPrincipleLPProof18}
r_{\theta,\lambda}\big(\phi^{n}_{i_0+\frac{1}{2},j_0+\frac{1}{2}}\big)-\nu_{\theta,\lambda}\big(\phi^{n}_{i_0+\frac{1}{2},j_0+\frac{1}{2}}\big)  \leq  U(\theta,\lambda)<0,
\end{align}
while the two terms on the left-hand side of  \eqref{thmMaximumPrincipleLPProof17}  are estimated as 
 \begin{equation}\label{thmMaximumPrincipleLPProof19}
\frac{1}{\tau}\(1-\phi^{n}_{i_0+\frac{1}{2},j_0+\frac{1}{2}}\)>0,~~-\epsilon^2 \Delta_h \phi_{i_0+\frac{1}{2},j_0+\frac{1}{2}}^{n+1} \geq0.
\end{equation}
Substituting \eqref{thmMaximumPrincipleLPProof18} and \eqref{thmMaximumPrincipleLPProof19} into \eqref{thmMaximumPrincipleLPProof17} yields a contradiction, and consequently, we obtain $\phi^{n+1}<1$.
\end{proof}

\subsection{Unconditional energy stability}

For $\phi\in \mathcal{C}_h$, the discrete total free energy is defined as
\begin{equation}\label{eqFullydiscreteEnergyPL}
E_h(\phi)=\( F(\phi),1\)_h+\frac{1}{2}\epsilon^2\|\grad_h\phi\|_h^2.
\end{equation}
\begin{thm}\label{thmFullySchmEnergyStabilityPL}
Assume  that  $0< \phi^{0}<1$ and $\lambda $ is chosen to satisfy \eqref{eqlambdaCondition} and \eqref{eqMaximumPrincipleCondition} for given $\theta>2$. For any time step size $\tau$, the total free energy of the scheme  \eqref{eqFullyDiscreteSchmLPEqt}  is dissipated with time steps, i.e.,  
\begin{equation}\label{eqFullySchmEnergyStabilityPL}
E_h(\phi^{n+1})\leq E_h(\phi^n).
\end{equation}
\end{thm}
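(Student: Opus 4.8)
The plan is to test the scheme against the time increment $\phi^{n+1}-\phi^{n}$ in the discrete inner product $\left(\cdot,\cdot\right)_h$, and then to bound the three resulting terms from below: the time-derivative term by $0$, the diffusion term by the telescoping difference of $\tfrac12\epsilon^2\|\grad_h\cdot\|_h^2$, and the chemical-potential term by the telescoping difference of $\left(F(\cdot),1\right)_h$ via the discrete energy-factorization inequality \eqref{eqLPfactorization06}.

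First I would record that Theorem \ref{thmMaximumPrincipleLP} applies under the present hypotheses, so that $0<\phi^{n}<1$ and $0<\phi^{n+1}<1$ at every cell center. This is exactly what makes $F$, $f(\phi^n,\phi^{n+1})$ and all terms below well defined and, more importantly, makes the pointwise inequality \eqref{eqLPfactorization06} legitimate at each grid point $(x_{i+\frac12},y_{j+\frac12})$.

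Then I would take $\left(\cdot,\phi^{n+1}-\phi^{n}\right)_h$ of \eqref{eqFullyDiscreteSchmLP}. The first term yields $\frac{1}{\tau}\|\phi^{n+1}-\phi^{n}\|_h^2\ge0$. For the second, the discrete variational identity \eqref{eqFullyDiscreteVariationalPrinciples02} gives $-\epsilon^2\left(\Delta_h\phi^{n+1},\phi^{n+1}-\phi^{n}\right)_h=\epsilon^2\left(\grad_h\phi^{n+1},\grad_h\phi^{n+1}-\grad_h\phi^{n}\right)_h$; applying the elementary identity $2a(a-b)=a^2-b^2+(a-b)^2$ componentwise and summing over the grid bounds this below by $\frac12\epsilon^2\|\grad_h\phi^{n+1}\|_h^2-\frac12\epsilon^2\|\grad_h\phi^{n}\|_h^2$. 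For the third term, multiplying \eqref{eqLPfactorization06} by $h^2$ and summing over all cells gives $\left(f(\phi^n,\phi^{n+1}),\phi^{n+1}-\phi^{n}\right)_h\ge\left(F(\phi^{n+1}),1\right)_h-\left(F(\phi^{n}),1\right)_h$.

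Finally, since $\left(f(\phi^n,\phi^{n+1}),\phi^{n+1}-\phi^{n}\right)_h-\epsilon^2\left(\Delta_h\phi^{n+1},\phi^{n+1}-\phi^{n}\right)_h=-\frac{1}{\tau}\|\phi^{n+1}-\phi^{n}\|_h^2$ by the scheme, adding the last two lower bounds gives
\begin{equation*}
E_h(\phi^{n+1})-E_h(\phi^{n})\le\left(f(\phi^n,\phi^{n+1}),\phi^{n+1}-\phi^{n}\right)_h-\epsilon^2\left(\Delta_h\phi^{n+1},\phi^{n+1}-\phi^{n}\right)_h=-\frac{1}{\tau}\|\phi^{n+1}-\phi^{n}\|_h^2\le0,
\end{equation*}
which is \eqref{eqFullySchmEnergyStabilityPL}. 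I do not expect a real obstacle: everything is a routine combination of testing, discrete summation by parts, and \eqref{eqLPfactorization06}. The only genuinely non-formal point is that \eqref{eqLPfactorization06} must be valid at the discrete level, which is why the maximum-principle bounds $0<\phi^{n}<1$ and $0<\phi^{n+1}<1$ from Theorem \ref{thmMaximumPrincipleLP} have to be in hand first; note that the positivity of $4(\lambda+1)-\theta$ is not used directly in this argument, only indirectly through Theorems \ref{thmExistenceOfSolutionLP} and \ref{thmMaximumPrincipleLP}.
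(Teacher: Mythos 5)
Your proposal is correct and follows essentially the same route as the paper's proof: invoke the maximum principle to guarantee $0<\phi^{n}<1$ so that \eqref{eqLPfactorization06} applies pointwise, bound the gradient-energy increment via summation by parts and the identity $a(a-b)=\tfrac12(a^2-b^2)+\tfrac12(a-b)^2$, and combine with the scheme to obtain $E_h(\phi^{n+1})-E_h(\phi^n)\leq-\frac{1}{\tau}\|\phi^{n+1}-\phi^{n}\|_h^2\leq0$. No gaps.
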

\begin{proof}
  Theorems \ref{thmExistenceOfSolutionLP} and \ref{thmMaximumPrincipleLP} argue  that there exists a unique $\phi^{n}\in\mathcal{C}_h$ and  $0< \phi^{n}<1$ for $n\geq1$.  The inequality \eqref{eqLPfactorization06}  yields 
\begin{align}\label{eqFullySchmEnergyStabilityPLProof01}
\(F(\phi^{n+1})-F(\phi^{n}),1\)_h\leq \(f(\phi^{n},\phi^{n+1}),\phi^{n+1}-\phi^n\)_h.
\end{align}
On the other hand, using \eqref{eqFullyDiscreteVariationalPrinciples02},   we can derive that
\begin{align}\label{eqFullySchmEnergyStabilityPLProof02}
\frac{1}{2}\(\|\grad_h \phi^{n+1}\|_h^2-\|\grad_h \phi^{n}\|_h^2\)
&=\(\grad_h \phi^{n+1}, \grad_h \(\phi^{n+1}-\phi^n\)\)_h-\frac{1}{2}\left\|\grad_h \(\phi^{n+1}-\phi^n\)\right\|_h^2\nonumber\\
&\leq\(\grad_h \phi^{n+1}, \grad_h \(\phi^{n+1}-\phi^n\)\)_h\nonumber\\
&=-\(\Delta_h \phi^{n+1},  \phi^{n+1}-\phi^n\)_h.
\end{align}
Using \eqref{eqFullyDiscreteSchmLP}, \eqref{eqFullySchmEnergyStabilityPLProof01} and \eqref{eqFullySchmEnergyStabilityPLProof02}, we deduce   
\begin{align}\label{eqFullySchmEnergyStabilityPLProof03}
E_h(\phi^{n+1})- E_h(\phi^n)
&\leq \(-\epsilon^2\Delta_h \phi^{n+1}+f(\phi^{n},\phi^{n+1}),\phi^{n+1}-\phi^n\)_h\nonumber\\
&\leq-\frac{1}{\tau}\|\phi^{n+1}-\phi^{n}\|_h^2,
\end{align}
which yields \eqref{eqFullySchmEnergyStabilityPL}.
\end{proof}

\begin{rem}
We note that  \eqref{eqMaximumPrincipleCondition} is just a sufficient condition to ensure the discrete maximum principle.
The  energy   dissipation law always holds  as long as $0<\phi^n<1$, $n\geq1$.  We apply the discrete chemical potential \eqref{eqLPfactorization05} for solving  the Cahn--Hilliard equation and find that $\phi^n$ always falls in  $(0,1)$,  thus the energy   dissipation is still validated. 
\end{rem}

\section{Numerical results}\label{secNumer}

In this section, we present numerical results to show the effectiveness of  the proposed scheme. In all numerical experiments, the computational domain is $\Omega=[-1,1]^2$, which is divided using a uniform mesh with $100\times100$ elements. Here, we take $\epsilon=0.05$. The proposed scheme admits   a very time step size in theory,  so we take the time step size $\tau=10^{10}$ for the purpose of verifying  this feature.

In this example,  we compare the discrete chemical potentials proposed in \eqref{eqLPfactorization05} and \eqref{eqLPfactorization05alternative02}.   We take $\theta=3$ and $\lambda=0$.  The initial condition of the phase variable is given as
 \begin{equation}\label{eqInitalConditionLP}
\phi(x,y)=\left\{\begin{array}{ccc}
10^{-5}, & |x|\leq0.35 ~~\textnormal{and}~~|y|\leq0.35, \\
1-10^{-5}, & \textnormal{otherwise}.
\end{array}\right.
\end{equation}

Figure \ref{LFHSquareACSchemeLFH12LFHLFHPara3StabConst0MaxMinPhiVals} depicts that the maximum and minimum values of the phase variable computed by the proposed scheme with \eqref{eqLPfactorization05} or \eqref{eqLPfactorization05alternative02} at each time point. Although   the initial   maximum and minimum values of $\phi^n$ are set to approach  the singular points,   all of $\phi^n$ never go  beyond  $(0,1)$ in the subsequent time, and consequently  the maximum principle is always  preserved by both \eqref{eqLPfactorization05} and \eqref{eqLPfactorization05alternative02}. 

 Figure \ref{LFHSquareLFHSquareACLFHLFHPara3StabConst0freeEnergyTwoSchemes} demonstrates that the total energies are always decreasing  with time steps.  The proposed scheme  with \eqref{eqLPfactorization05} has more rapid dissipation rate than that with  \eqref{eqLPfactorization05alternative02}, thus the square-shape phase simulated by the former can be    shrinking into a circle more rapidly. This phenomenon is observed  in Figure \ref{LFHSquareLFHSquareACTwoSchemesLFHLFHPara3StabConst0PhaseVariable}, which illustrates
  the   dynamical evolution process of the phase variable at different time steps.   The superiority of  \eqref{eqLPfactorization05}  results from the semi-implicit treatment for $H_c(\phi)$, while   it is treated   explicitly in \eqref{eqLPfactorization05alternative02}.

 \begin{figure}
           \centering \subfigure[the scheme with \eqref{eqLPfactorization05} ]{
            \begin{minipage}[b]{0.38\textwidth}
            \centering
             \includegraphics[width=0.95\textwidth,height=1.33in]{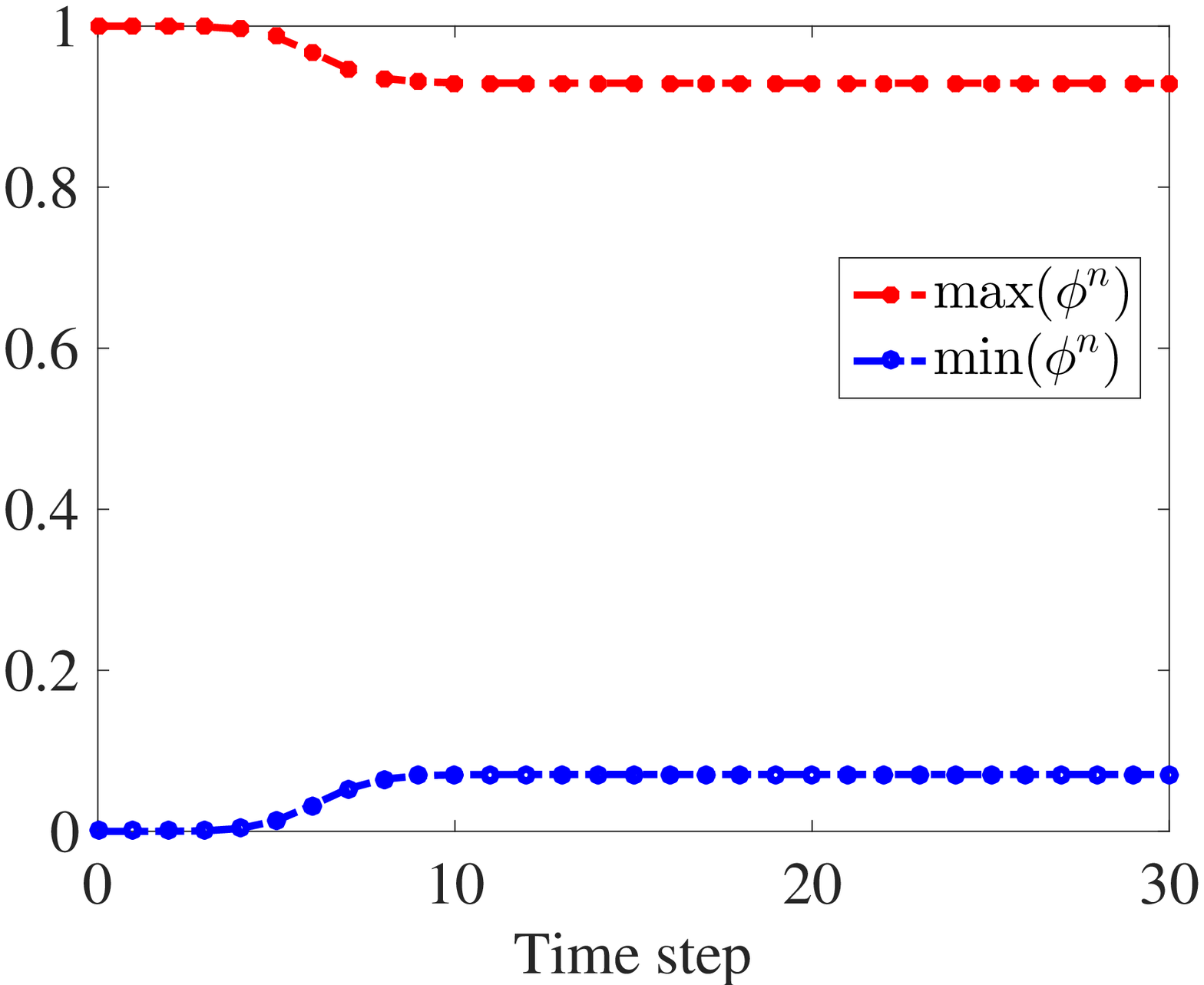}
             \label{LFHSquareACSchemeLFH2LFHLFHPara3StabConst0MaxMinPhiVals}
            \end{minipage}
            }
            \centering \subfigure[the scheme with  \eqref{eqLPfactorization05alternative02} ]{
            \begin{minipage}[b]{0.38\textwidth}
            \centering
             \includegraphics[width=0.95\textwidth,height=1.33in]{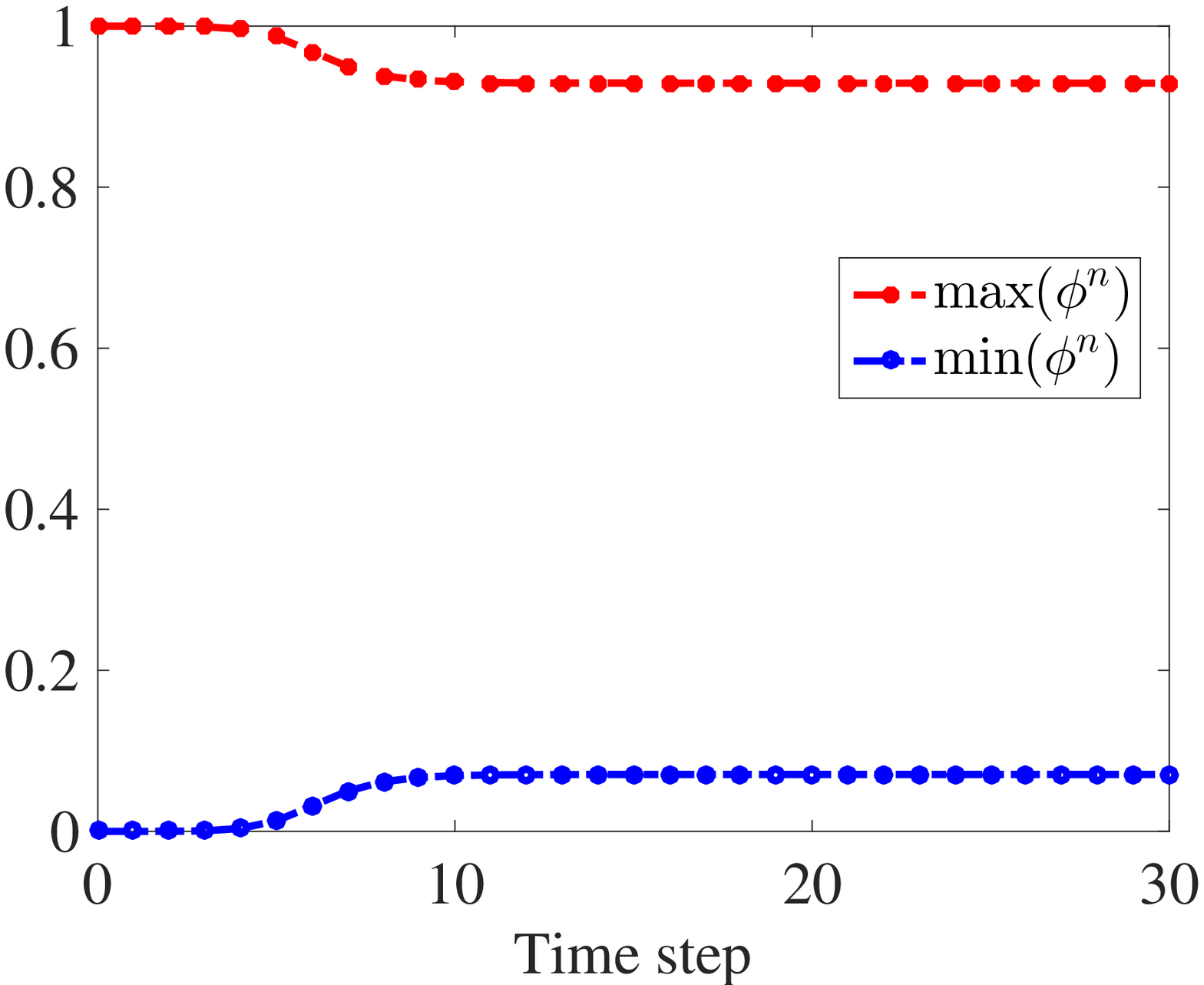}
             \label{LFHSquareACSchemeLFH1LFHLFHPara3StabConst0MaxMinPhiVals}
            \end{minipage}
            }
           \caption{Verification of the maximum principle.}
            \label{LFHSquareACSchemeLFH12LFHLFHPara3StabConst0MaxMinPhiVals}
 \end{figure}

\begin{figure}
           \centering \subfigure[]{
            \begin{minipage}[b]{0.5\textwidth}
            \centering
             \includegraphics[width=0.95\textwidth,height=0.6\textwidth]{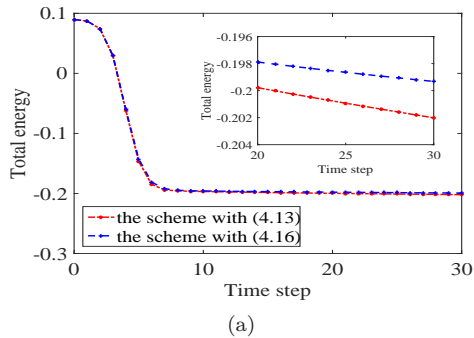}
             \label{LFHSquareLFHSquareACLFHLFHPara3StabConst0freeEnergyTwoSchemes}
            \end{minipage}
            }
           \caption{Energy curves.}
            \label{LFHSquareLFHSquareACLFHLFHPara3StabConst0freeEnergyTwoSchemes}
 \end{figure}

\begin{figure}
            \centering \subfigure[scheme with  \eqref{eqLPfactorization05},  $n=10$]{
            \begin{minipage}[b]{0.31\textwidth}
               \centering
             \includegraphics[width=0.95\textwidth,height=0.8\textwidth]{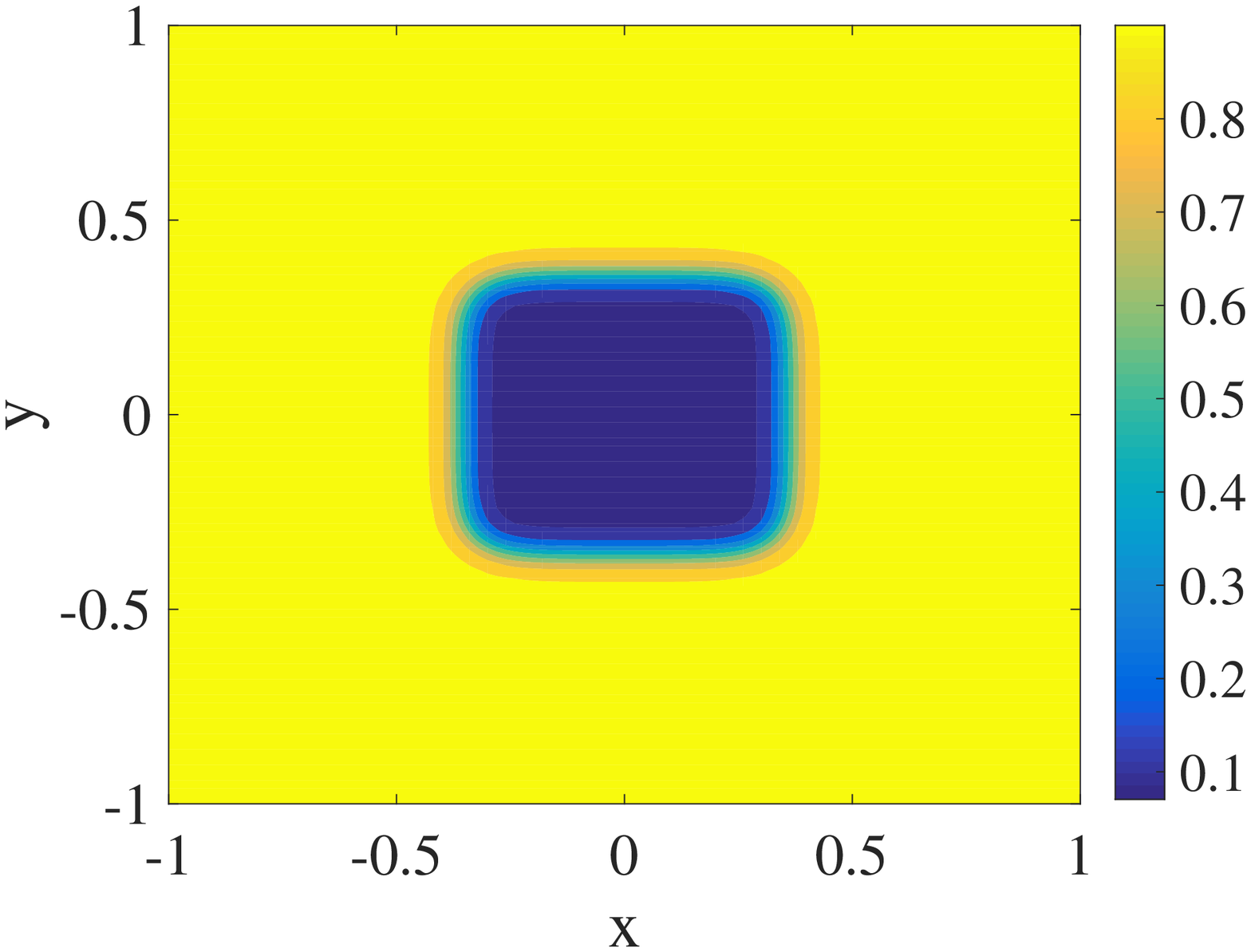}
              \label{LFHSquareLFHSquareACSchemeLFH2LFHStabConst0PhaseVariableOfiT0}
            \end{minipage}
            }
            \centering \subfigure[scheme with  \eqref{eqLPfactorization05},  $n=20$]{
            \begin{minipage}[b]{0.31\textwidth}
            \centering
             \includegraphics[width=0.95\textwidth,height=0.8\textwidth]{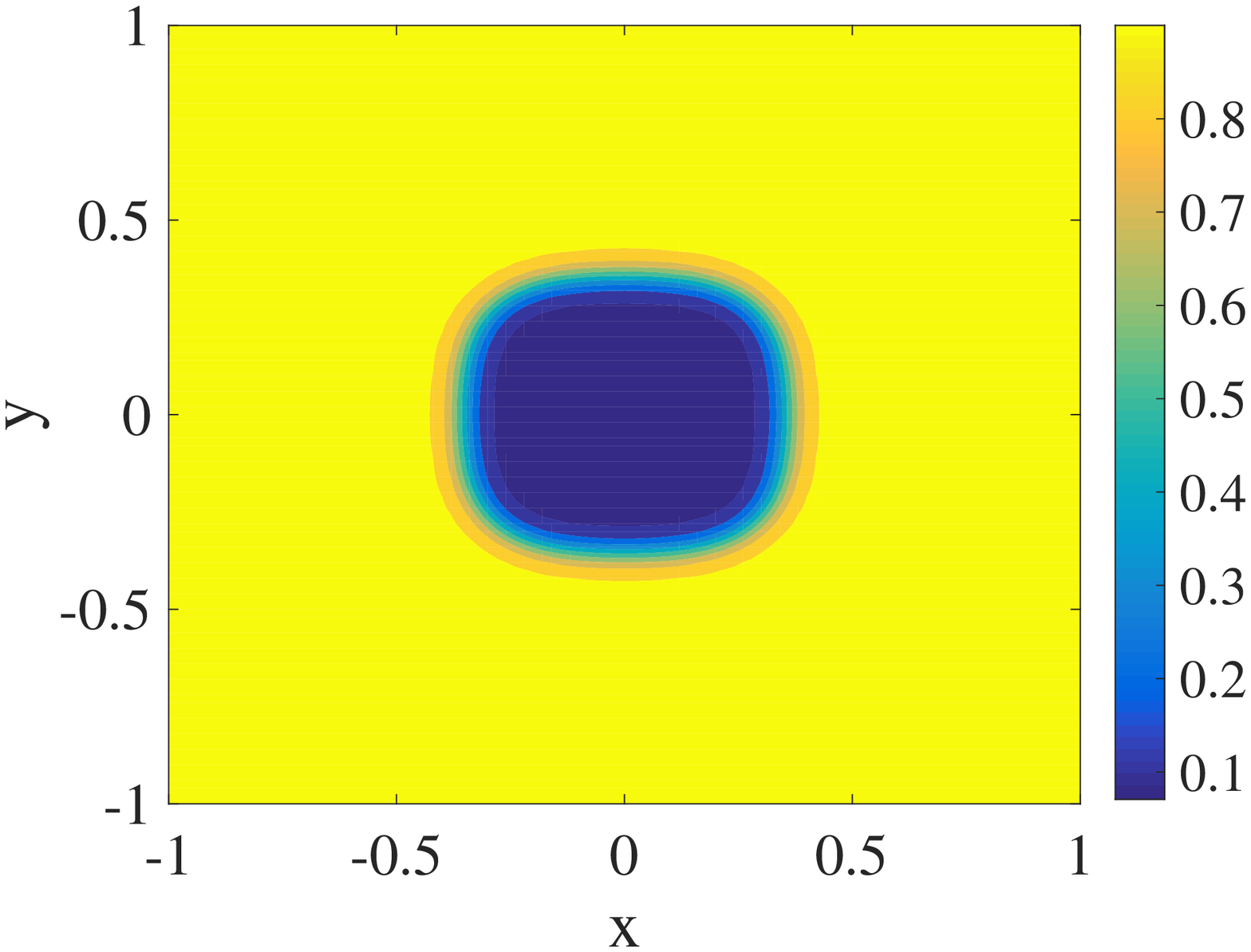}
            \end{minipage}
            }
           \centering \subfigure[scheme with  \eqref{eqLPfactorization05},  $n=30$]{
            \begin{minipage}[b]{0.3\textwidth}
            \centering
             \includegraphics[width=0.95\textwidth,height=0.8\textwidth]{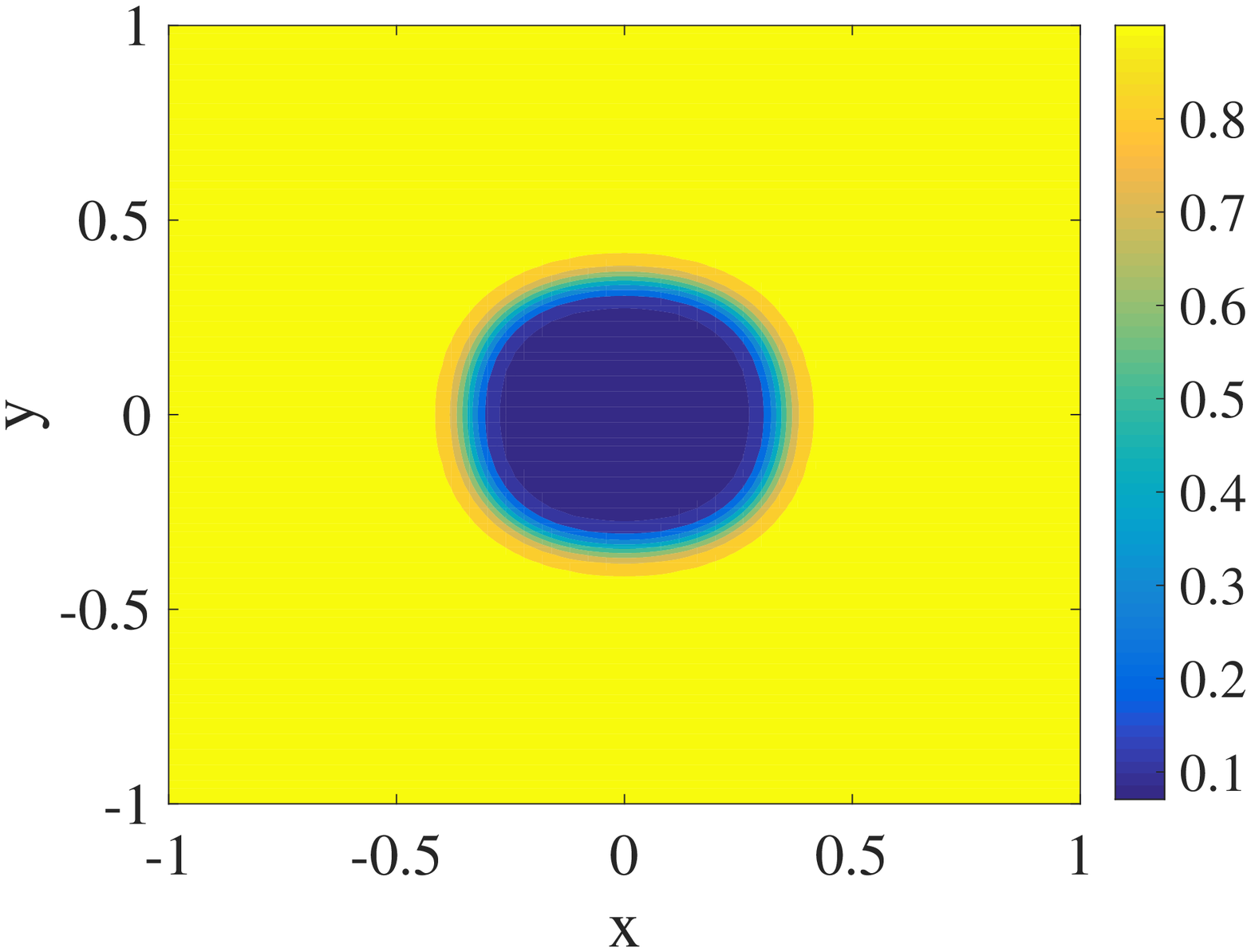}
            \end{minipage}
            }
           \centering \subfigure[scheme with  \eqref{eqLPfactorization05alternative02}, $n=10$]{
            \begin{minipage}[b]{0.31\textwidth}
               \centering
             \includegraphics[width=0.95\textwidth,height=0.8\textwidth]{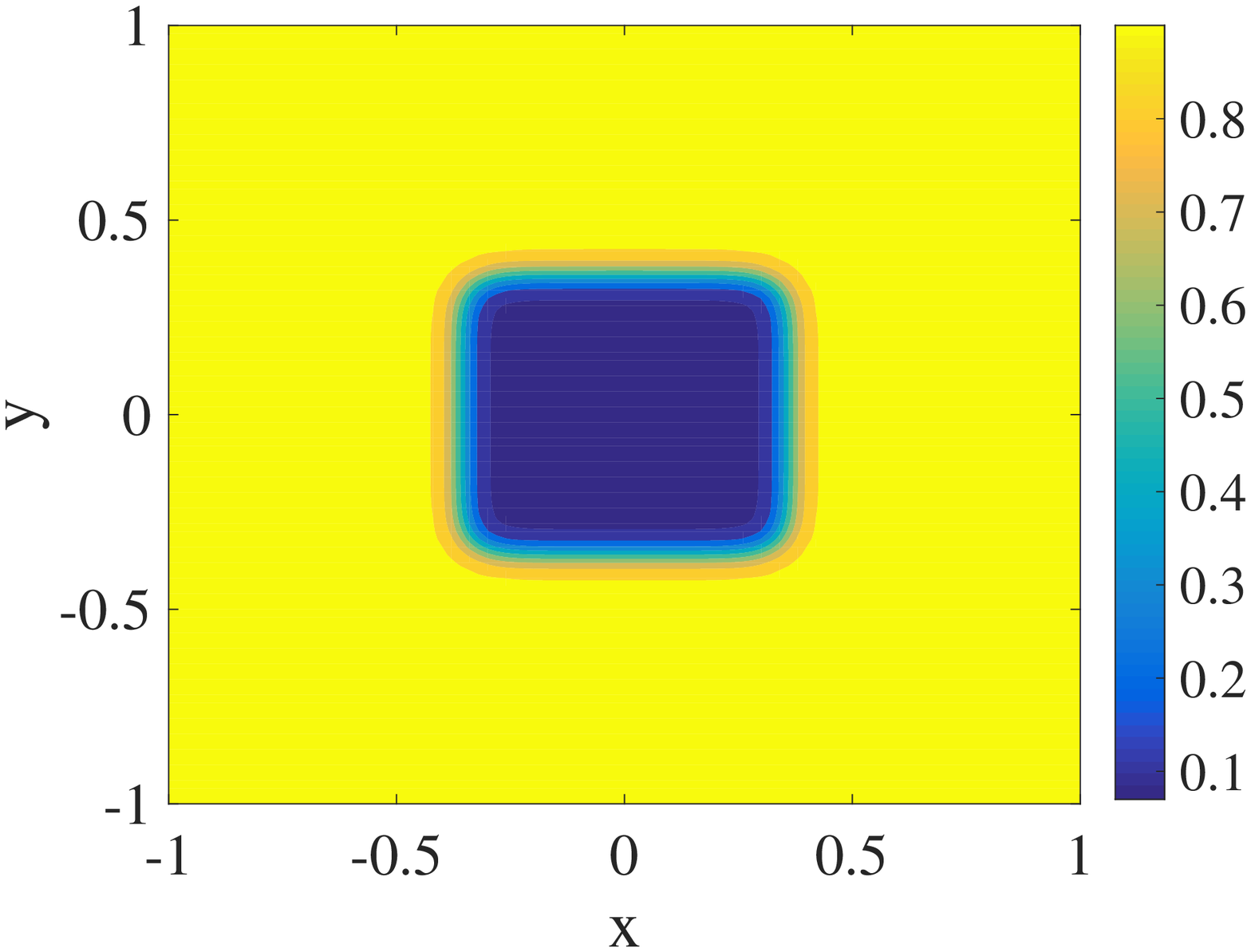}
              \label{LFHSquareLFHSquareACSchemeLFH2LFHStabConst0PhaseVariableOfiT0}
            \end{minipage}
            }
            \centering \subfigure[scheme with  \eqref{eqLPfactorization05alternative02}, $n=20$]{
            \begin{minipage}[b]{0.31\textwidth}
            \centering
             \includegraphics[width=0.95\textwidth,height=0.8\textwidth]{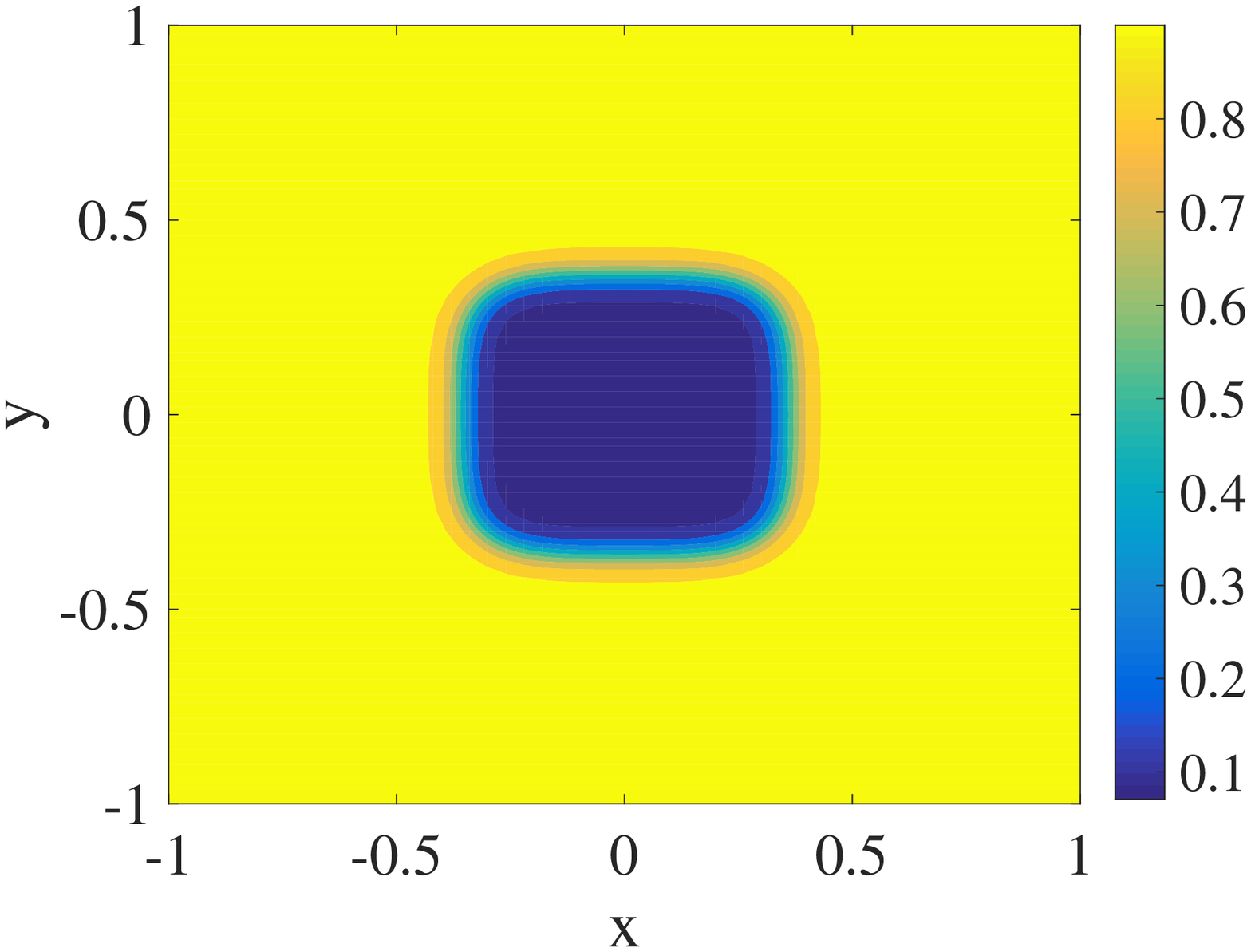}
            \end{minipage}
            }
           \centering \subfigure[scheme with  \eqref{eqLPfactorization05alternative02}, $n=30$]{
            \begin{minipage}[b]{0.3\textwidth}
            \centering
             \includegraphics[width=0.95\textwidth,height=0.8\textwidth]{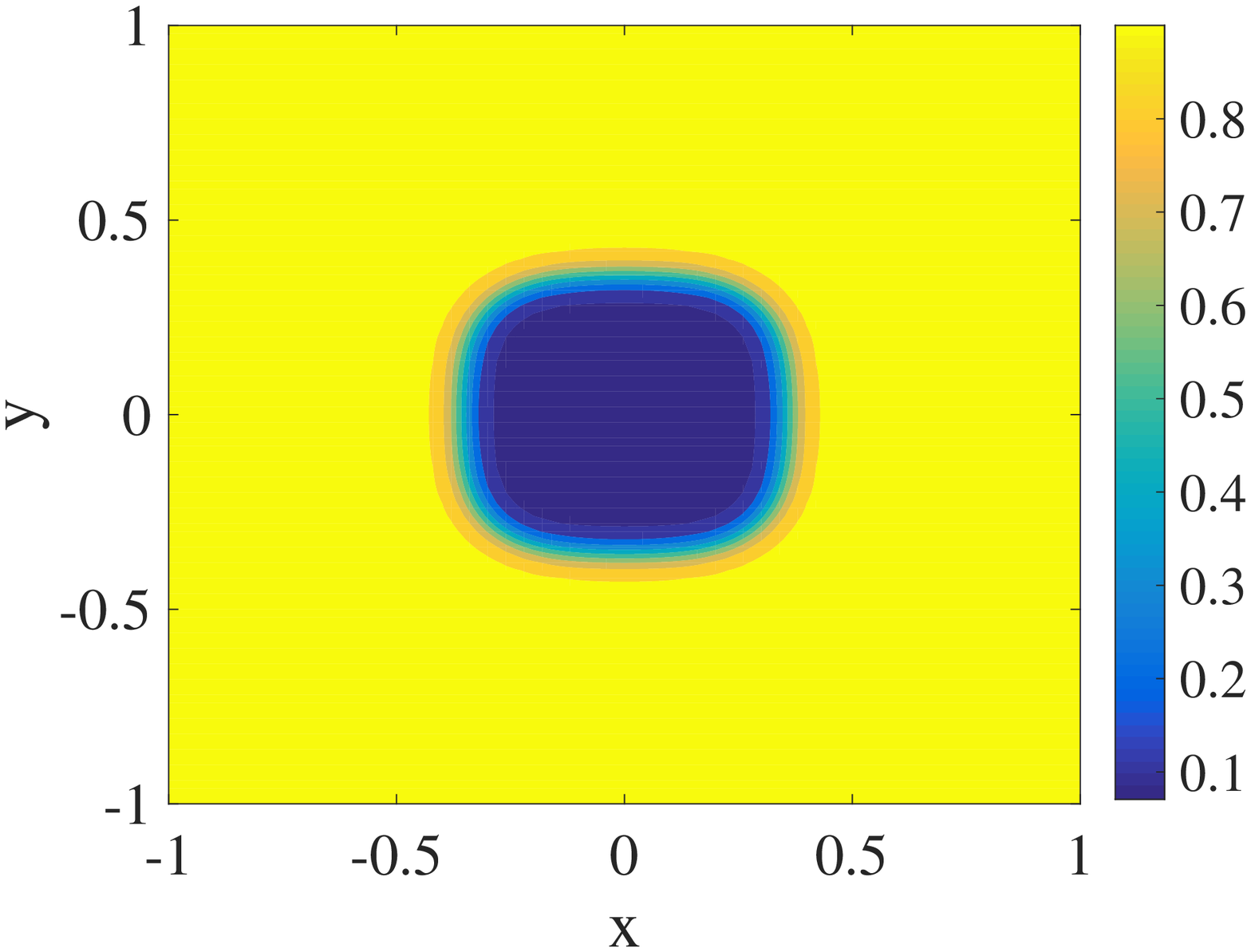}
            \end{minipage}
            }
           \caption{The dynamical evolution of the phase variable computed by the scheme with  \eqref{eqLPfactorization05} or \eqref{eqLPfactorization05alternative02} at different time steps.}
            \label{LFHSquareLFHSquareACTwoSchemesLFHLFHPara3StabConst0PhaseVariable}
 \end{figure}

\section{Conclusions}\label{secConclusions}


The stabilized  energy factorization  approach  has been developed  to treat the logarithmic Flory--Huggins   potential    semi-implicitly. The stability term can eliminate  the instability caused by the large energy parameter.     Compared to the prevalent convex-splitting approach  and auxiliary variable approaches, such approach leads to the simply  linear numerical scheme inheriting the original  energy dissipation law. To our best   knowledge,  the proposed scheme  is the first  such linear, unconditionally original  energy  stable scheme for the logarithmic Flory--Huggins   potential.  Moreover, the proposed scheme is rigorously proved to satisfy the discrete maximum principle under the appropriate choices of the stability constant. 
Numerical results  are in good agreement with theoretical analysis. 

In addition, the proposed scheme can be extended to the Cahn--Hilliard equation and the corresponding numerical results (not presented here) demonstrate that the maximum principle and unconditional energy stability  are still validated as long as the proper stability constant is taken although   theoretical proof  may be not available for   the maximum principle of the Cahn--Hilliard equation.

\small

\end{document}